\documentclass[11pt]{article}

\usepackage[letterpaper,margin=1in]{geometry}

\usepackage{amsmath,amssymb,amsthm}
\usepackage{graphicx}

\usepackage{microtype}

\usepackage[hidelinks]{hyperref}
\hypersetup{pdftitle={Matroid Packing Games: The Core and the Nucleolus},
  pdfauthor={Pengfei Liu, Han Xiao, Xin Chen, Qizhi Fang}}

\newtheorem{theorem}{Theorem}
\newtheorem{lemma}{Lemma}
\newtheorem{corollary}{Corollary}

\newtheorem{definition}{Definition}
\newtheorem{remark}{Remark}

\begin{document}

\begin{center}
{\LARGE\bfseries Matroid Packing Games: The Core and the Nucleolus\par}
\bigskip
Pengfei Liu\textsuperscript{1}\quad
Han Xiao\textsuperscript{1}\quad
Xin Chen\textsuperscript{2}\quad
Qizhi Fang\textsuperscript{1,*}\par
\medskip
{\small
\textsuperscript{1}School of Mathematical Sciences, Ocean University of China\\
238 Songling Road, Laoshan District, Qingdao, Shandong 266100, China\\
\textsuperscript{2}Faculty of Information Science and Engineering, Ocean University of China\\
238 Songling Road, Laoshan District, Qingdao, Shandong 266100, China\\
\textsuperscript{*}Corresponding author: \texttt{qfang@ouc.edu.cn}\par}
\end{center}
\vspace{0.5em}

\begin{abstract}
In this paper, we study the matroid base packing problem from the perspective of cooperative game theory and define the associated matroid base packing game. This model extends the network strength game \cite{BB20} from graphic matroids to general matroids. Building on the exchange structure of matroid, we develop a
graph theoretic framework for analyzing the core and the nucleolus in this game. This  framework yields a compact characterization of core nonemptiness and enables polynomial time algorithms for testing core nonemptiness, finding a core allocation, and computing the nucleolus. In particular, the approach provides a broader and more structured framework  for the nucleolus computation problem arising in network strength games \cite{BB20}, covering both the nonempty core and empty core cases.

\end{abstract}
\begin{quote}
\small
\noindent\textbf{Keywords.}
Cooperative game;  matroid packing; core; nucleolus.
\end{quote}
\vspace{0.5em}
\section{Introduction}

The matroid base packing problem, finding the maximum number of disjoint bases in a matroid, is a classical topic in combinatorial optimization with wide applications in network design, resource allocation, and system reliability. In this paper, we examine this problem from a cooperative game theoretic perspective and introduce the matroid packing game. In this game, the players are the elements of the matroid, and the value of a coalition is the maximum number of pairwise disjoint bases contained in the coalition. In particular, the matroid packing game generalizes the network strength game studied in~\cite{BB20}, which corresponds to the special case of graphic matroids.

For example, in a linear matroid the players are vectors and a base is a basis of the span of all available vectors. A coalition can therefore provide several disjoint complete sets of independent resources. In a transversal matroid, assuming that all tasks can be assigned, bases are teams that can be matched to all tasks. Partition matroids similarly model resource bundles meeting prescribed category quotas.

A central question in cooperative game theory is to distribute the total profit to
the players. Many solution concepts have been proposed for profit allocation. One solution concept is the core, which requires that no coalition benefits by breaking
away from the grand coalition. A key algorithmic framework for analyzing the core was established by \cite{DIN99}. Following this work, the core has been studied for flow games~\cite{FZCD02,KP09}, directed acyclic graph games~\cite{BTT17}, network strength games~\cite{BB20}, and arboricity games~\cite{XF23}, among others.
However, the core may be empty in many games. To address this, the least core\cite{MPS79} is introduced as a relaxation. It maximizes the smallest excess over all coalitions, thereby identifying allocations that are "as stable as possible" when exact stability cannot be achieved.

Beyond the least core, the nucleolus is introduced by \cite{Schmeidler}, which is
the unique solution that lexicographically maximizes the vector of nondecreasingly
ordered excess. The standard procedure for computing the nucleolus, called Maschler's scheme and proposed in~\cite{MPS79}, involves solving a sequence of linear programs. However, the size of these linear programs may be exponentially
large due to the number of constraints corresponding to all possible coalitions. Hence
it is in general unclear how to apply this procedure. The first polynomial algorithm
for computing the nucleolus was proposed by \cite{M78} for cooperative cost
games defined on directed trees. Later on, polynomial time algorithms were developed for several classes of combinatorial games, including bankruptcy games~\cite{AM85}, standard tree games~\cite{GMQZ96}, matching games~\cite{KP03,BKP12,KPT20}, airport profit games~\cite{BITZ06}, flow games~\cite{DFS06,PRB06,KP09}, voting games~\cite{EP09}, spanning connectivity games~\cite{ALPS09}, vertex cover games~\cite{CLZ12}, directed acyclic graph games~\cite{BTT17}, shortest path games~\cite{BB19}, network strength games~\cite{BB20}, and arboricity games~\cite{XF23}. On
the negative side, NP hardness results for computing the nucleolus were shown for,
e.g., minimum spanning tree games \cite{UWJ98}, threshold games \cite{EGG07}, $b$ matching games
\cite{KTZ21}, flow games and linear production games \cite{DFS06} and \cite{KP09}.

The main contribution of this paper is twofold. First, we characterize the core and the least core of
the matroid packing game and give polynomial time algorithms for computing the nucleolus in all cases, in the independence oracle model.
Second, we derive from the exchange properties of integer and fractional matroid base packings a
partition and order structure that reduces the exponentially many coalition
constraints in Maschler's scheme to polynomially many local constraints. The closest related works are the studies of network strength games~\cite{BB20}
and arboricity games~\cite{XF23}, both of which also use such a structure:
the former relies on spanning tree packings, network strength computations and graph contractions to construct prime sets and a precedence relation, while the latter uses forest covers and minimal densest minors to obtain a prime partition and an ancestor relation. Arboricity games concern covering, and their prime partition is inspired by the principal partition of matroids. Both works compute the nucleolus when the core is nonempty. Once a partition and an order are available, we likewise reduce coalition constraints to local constraints and use the hierarchy partition. Our construction obtains this structure directly from the exchange graph of a fixed packing. We also establish the local constraint reduction when the core is empty.

For the nonempty core case, we start from any fixed maximum base packing. The exchange graph characterizes the core through zero allocations on uncovered elements and nonnegative differences along its arcs. The relevant strongly connected components (SCCs) are those from which no uncovered element is reachable, and their partial order is given by reachability. This structure gives a compact second linear program whose unique optimal allocation is determined by the hierarchy partition.

When the core is empty, we fix a maximum fractional base packing. The least core is identified with a scaled optimal solution set of the fractional packing dual. A key step is to show that suitable multiples of the packing load vector are convex combinations of incidence vectors of bases and of unions of a maximum number of disjoint bases. These decompositions, together with complementary slackness, connect the fractional packing to tight coalitions corresponding to maximum base packings. We then obtain the least core partition and order from the SCCs of the fractional packing exchange graph, and prove that the second linear program can again be replaced by local constraints. The resulting hierarchy determines the nucleolus. Here fractional packing is an auxiliary optimization problem: the coalition values remain integer packing numbers. This differs from the fractional game considered in~\cite[Section~8]{BB20}, where the characteristic function itself is defined by fractional spanning tree packing.

 This paper is organized as follows. Section 2 introduces relevant concepts and reviews results on matroid exchange properties. Section 3 studies properties
of the core. Section 4 develops an efficient algorithm for computing the
nucleolus when the core is nonempty. Section 5  studies properties
of the  least core and develops an efficient algorithm for computing the
nucleolus when the core is empty.
\section{Preliminaries and Definitions}

\subsection{Cooperative game}

A \emph{cooperative revenue game} $\Gamma=(N,\gamma)$ consists of a \emph{player} set $N$ and a \emph{characteristic function} $\gamma:2^{N}\rightarrow\mathbb{R}$ with $\gamma(\emptyset)=0$. For each subset $S\subseteq N$, called a \emph{coalition}, the value $\gamma(S)$ is the revenue that the players in $S$ can obtain cooperatively. A central question in cooperative game theory is how to distribute the total revenue $\gamma(N)$ among the players in a fair and stable way. An \emph{allocation} of the game $\Gamma=(N,\gamma)$ is a vector $x=(x_{i})_{i\in N}\in\mathbb{R}^{|N|}_{\geq 0}$ with $\sum_{i\in N}x_{i}=\gamma(N)$. Let $X(\Gamma)$ denote the set of allocations of $\Gamma$.
Here and throughout this work, we use the shorthand notation
$
x(S)=\sum_{i\in S}x_i.
$

The \emph{core} of $\Gamma$, denoted by $\mathcal{C}(\Gamma)$, is defined as
\[
\mathcal{C}(\Gamma)=\{x\in X(\Gamma):x(S)\geq\gamma(S),\ \forall S\subseteq N\}.
\]
A core allocation means that no coalition can benefit by breaking away from the grand coalition $N$.  Since the core may be empty or contain many allocations, refined solution concepts, such as the least core and the nucleolus, have been further investigated.

Given an allocation $x\in X(\Gamma)$, the \emph{excess} of a coalition $S\subseteq N$ \emph{w.r.t.} $x$ is defined as
\[
e(S,x)=x(S)-\gamma(S).
\]
It measures the satisfaction of the coalition under $x$. Clearly, $x\in\mathcal{C}(\Gamma)$ if and only if $e(S,x)\geq 0$ for each $S\subseteq N$.

As a relaxation of the core, the \emph{least core} of $\Gamma$ is defined through the following optimization problem.
Let
$\varepsilon_1^*
=
\max\{\varepsilon: x\in X(\Gamma),\
x(S)\ge \gamma(S)+\varepsilon,\ \forall S\subseteq N\}.
$
The least core is
\[
LC(\Gamma)
=
\{x\in X(\Gamma):x(S)\ge \gamma(S)+\varepsilon_1^*,\ \forall S\subseteq N\}.
\]

For the nucleolus, we order all the excesses $e(S,x)$ with $\emptyset\neq S\subsetneq N$ into a nondecreasing sequence and obtain the excess vector
\[
\theta(x)=(e(S_{1},x),e(S_{2},x),\ldots,e(S_{2^{|N|}-2},x)).
\]
The \emph{nucleolus} of $\Gamma$, denoted by $\eta(\Gamma)$, is defined as the set of allocations that lexicographically maximize $\theta(x)$ over $X(\Gamma)$. That is,
\[
\eta(\Gamma)=\{x\in X(\Gamma):\ \theta(x)\geq_{\textrm{lex}}\theta(y)\ \forall y\in X(\Gamma)\}.
\]
It has been shown that $\eta(\Gamma)$ is a singleton, $\eta(\Gamma)\in\mathcal{C}(\Gamma)$ when $\mathcal{C}(\Gamma)\neq\emptyset$, and $\eta(\Gamma)\in\mathcal{L}\mathcal{C}(\Gamma)$ when $\mathcal{C}(\Gamma)=\emptyset$.

Maschler~\cite{MPS79} proposed that $\eta(\Gamma)$ can be obtained by recursively solving a sequence of linear programs, denoted by SLP$(\eta(\Gamma))$. For $k=1,2,\ldots$, the linear program at step $k$ is
\begin{equation}\label{nucleolus LP}
(\textnormal{LP}_k): \ \begin{array}{rl}
\mbox{max}&  \varepsilon\\
\vspace*{1mm} \mbox{s.t.}&\left\{
\begin{array}{ll}
 x(S)\geq \gamma(S)+\varepsilon
 & \quad \forall S\in 2^N\backslash
 \displaystyle\textrm{fix}(X_{k-1}),  \vspace*{1mm}\\
 x\in X_{k-1}. &
\end{array}\right.
\end{array}
\end{equation}
Initially, set $X_{0}=X(\Gamma)$, $\varepsilon_{0}=0$, and
$\textrm{fix}(X_{0})=\{\emptyset,N\}$.
Let $\varepsilon_{k}$ be the optimal value of $(\textnormal{LP}_{k})$, and
$
X_{k}
=
\{x\in X_{k-1}:\,(x,\varepsilon_{k})\ \textrm{is an optimal solution of}\ (\textnormal{LP}_{k})\}.
$
Moreover,
$
\textrm{fix}(X_{k})
=
\{S\subseteq N:\ x(S)=x^{'}(S),\ \forall x,x^{'}\in X_{k}\}.
$
Clearly, the set $X_{1}$ is the least core of $\Gamma$.

\subsection{Matroid packing game}

A pair $\mathcal{M}=(N,\mathcal{I})$ is called a matroid, if $N$ is a finite set and $\mathcal{I}$ is a nonempty collection of subsets of $N$ satisfying:
\begin{enumerate}
\item[(1)] If $I\in\mathcal{I}$ and $J\subseteq I$, then $J\in\mathcal{I}$;
\item[(2)] If $I,J\in\mathcal{I}$ and $|I|<|J|$, then $I+v\in\mathcal{I}$ for some $v\in J\setminus I$.
\end{enumerate}
We use the shorthand notation $I+v=I\cup\{v\}$ and $I-v=I\setminus\{v\}$. Members of $\mathcal{I}$ are called independent sets. A base of $\mathcal{M}$ is a maximal independent set (\emph{w.r.t.} inclusion). All bases of $\mathcal{M}$ have equal cardinality, this common cardinality is called the rank of $\mathcal{M}$, denoted $r_{\mathcal{M}}(N)$. Throughout the paper, we adopt the computational viewpoint that a matroid $\mathcal{M}=(N,\mathcal{I})$ is presented to an algorithm solely via its ground set $N$ and an independence oracle answering whether any given subset $I\subseteq N$ is independent; accordingly, all matroids under discussion are assumed to be equipped with such an oracle.

The matroid base packing problem asks for a maximum cardinality collection of
pairwise disjoint bases of \(\mathcal M\). Its optimal value is called the packing
number and is denoted by \(\rho(\mathcal M)=\rho^*\). Throughout the paper,
unless otherwise stated, a base packing means a maximum base packing.

\begin{definition}
Let $\mathcal{M}=(N,\mathcal{I})$ be a matroid . The associated matroid packing game $\Gamma_{\mathcal{M}}=(N,\gamma)$ is defined as
\begin{enumerate}
\item[(1)] The player set is grand set $N$;
\item[(2)] $\forall\,S\subseteq N$, $\gamma(S)$ is the maximum number of pairwise disjoint bases of \(\mathcal M\) contained in \(S\).
\end{enumerate}
\end{definition}

Throughout the paper, we assume that $r_{\mathcal M}(N)>0$ and that $N$ is not a base of $\mathcal M$. If $N$ is a base, then $\gamma(N)=1$ and $\gamma(S)=0$ for every proper subset $S$ of $N$. In this case, the nucleolus assigns $1/|N|$ to every player.

\begin{remark}
When \(\mathcal M\) is the graphic matroid of a connected graph \(G=(V,E)\),
the bases of \(\mathcal M\) are the spanning trees of \(G\). Hence the associated
matroid packing game specializes to the network strength game\cite{BB20}, where the players
are the edges and the value of an edge set \(S\subseteq E\) is the maximum number
of pairwise edge disjoint spanning trees contained in \(S\).
\end{remark}

\subsection{Matroid base packing}\label{sec:packing-preliminaries}

Let $\mathcal B$ be the collection of bases of $\mathcal M$. For $S\subseteq N$, let $r_{\mathcal M}(S)$ denote the maximum size of an independent subset of $S$. The base packing problem can be written as
\begin{equation}\label{eq:integer-packing}
(\mathrm{IP}):\quad
\begin{array}{rl}
\mbox{max}&\displaystyle\sum_{B\in\mathcal B}\lambda_B\\[1mm]
\mbox{s.t.}&\left\{
\begin{array}{ll}
\displaystyle\sum_{B\in\mathcal B:v\in B}\lambda_B\leq1,
&\quad\forall v\in N,\\[2mm]
\lambda_B\geq0,\ \text{integer valued},&\quad\forall B\in\mathcal B.
\end{array}\right.
\end{array}
\end{equation}
Each feasible $\lambda_B$ is either zero or one, and the bases with $\lambda_B=1$ are pairwise disjoint. Thus the optimal value of $(\mathrm{IP})$ is $\rho^*$. A maximum base packing can be computed in polynomial time~\cite[Theorem~42.5]{Schr03}.

Relaxing the integrality condition gives the fractional base packing problem. Its optimal value is denoted by $\rho_f^*$. The linear program and its dual are
\begin{equation}\label{eq:fractional-packing-dual}
\begin{array}{rll}
(\mathrm{FP}):&\displaystyle\max&\displaystyle\sum_{B\in\mathcal B}\lambda_B\\[1mm]
&\mbox{s.t.}&\displaystyle\sum_{B\in\mathcal B:v\in B}\lambda_B\leq1,
\quad\forall v\in N,\\[1mm]
&&\lambda_B\geq0,\quad\forall B\in\mathcal B,
\end{array}
\qquad
\begin{array}{rll}
(\mathrm{FD}):&\min&y(N)\\[1mm]
&\mbox{s.t.}&y(B)\geq1,\quad\forall B\in\mathcal B,\\[1mm]
&&y(v)\geq0,\quad\forall v\in N.
\end{array}
\end{equation}
By linear programming duality, both programs have optimal value $\rho_f^*$. A maximum fractional base packing can be computed in strongly polynomial time, represented by a polynomial number of bases and their positive rational weights~\cite[Corollary~42.7a]{Schr03}. These formulations will be used in Section~5 to characterize the least core.

\subsection{Graph characterization of base packing and fractional base packing}\label{sec:packing-exchange-graphs}

Let $\mathcal{M}=(N,\mathcal{I})$ be a matroid, and let $\mathcal{B}$ be the collection of bases of $\mathcal{M}$. In this subsection, we introduce the base exchange graphs associated with bases, base packings and fractional base packings, using the exchange structure of matroids~\cite{Schr03}.

\begin{definition}\label{base-exchange-graphs}\cite{Schr03}
Let $\mathcal{M}=(N,\mathcal{I})$ be a matroid.
\begin{enumerate}
\item[(1)] For $B\in\mathcal{B}$, define the (bipartite) directed base exchange graph $G_{B}=(N,E_{B})$ by:
\begin{enumerate}
\item[a)] The vertex set is $N$;
\item[b)] The (directed) edge set is $E_{B}=\{(u,v)|\,u\in B,v\in N\setminus B,\,B-u+v\in\mathcal{B}\}$.
\end{enumerate}
\item[(2)] For a base packing $\mathcal{P}_{B}=\{B_{1},...,B_{\rho^{*}}\}$, define the directed  graph $G[\mathcal{P}_{B}]=(N,E[\mathcal{P}_{B}])$ by the union of the graphs $G_{B_{i}}=(N,E_{B_{i}})$ $(i=1,2,...,\rho^{*})$. That is,
\begin{enumerate}
\item[a)] The vertex set is $N$;
\item[b)] The (directed) edge set is $E[\mathcal{P}_{B}]=\bigcup_{i=1}^{\rho^{*}}E_{B_{i}}$.
\end{enumerate}
\item[(3)] For a fractional base packing with positive weight bases $\widetilde{\mathcal P}_B=\{\widetilde B_1,\ldots,\widetilde B_s\}$ and weights $\lambda_1,\ldots,\lambda_s>0$, define the directed graph $G[\widetilde{\mathcal P}_B]=(N,E[\widetilde{\mathcal P}_B])$ by the union of the graphs $G_{\widetilde B_i}$ $(i=1,\ldots,s)$. That is,
\begin{enumerate}
\item[a)] The vertex set is $N$;
\item[b)] The (directed) edge set is $E[\widetilde{\mathcal P}_B]=\bigcup_{i=1}^s E_{\widetilde B_i}$.
\end{enumerate}
The graph is determined by the positive weight bases, which need not be pairwise disjoint.
\end{enumerate}
\end{definition}



\begin{lemma}\label{matroid exchange 1}(Corollary 39.12a in \cite{Schr03})
Let $\mathcal{M}=(N,\mathcal{I})$ be a matroid, and let $B_{1},B_{2}\in\mathcal{B}$. Then $E_{B_{1}}$ contains a perfect matching between $B_{1}\setminus B_{2}$ and $B_{2}\setminus B_{1}$.
\end{lemma}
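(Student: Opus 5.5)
We prove the statement via Hall's theorem applied to the bipartite subgraph of $G_{B_1}$ with sides $X=B_1\setminus B_2$ and $Y=B_2\setminus B_1$. These sets have equal size because all bases have the same cardinality. So it suffices to show that every $Z\subseteq Y$ has at least $|Z|$ in-neighbours in $X$. Here the in-neighbours of $Z$ are the $u\in X$ such that $(u,v)\in E_{B_1}$ for some $v\in Z$.

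The first step is to describe arcs through fundamental circuits. For $v\notin B_1$, the set $B_1+v$ contains a unique circuit $C(B_1,v)$, and $B_1-u+v$ is a base if and only if $u\in C(B_1,v)-v$. Hence the in-neighbourhood of $v\in Y$ within $B_1$ is exactly $C(B_1,v)\setminus\{v\}$. No element of this set can lie in $B_1\cap B_2$ and still realize all of the circuit: if $C(B_1,v)-v\subseteq B_1\cap B_2$, then $C(B_1,v)\subseteq B_2$, contradicting the independence of $B_2$. More generally, for $Z\subseteq Y$ let $U=\bigcup_{v\in Z}(C(B_1,v)-v)\cap X$ be the neighbourhood of $Z$ in $X$.

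The key step, and the main obstacle, is the Hall inequality $|U|\ge |Z|$. Every $v\in Z$ lies in the span of $C(B_1,v)-v$, which is contained in $U\cup(B_1\cap B_2)$. Hence $Z\cup(B_1\cap B_2)\subseteq \mathrm{span}(U\cup(B_1\cap B_2))$. The set $Z\cup(B_1\cap B_2)$ is a subset of $B_2$, so it is independent, and its size is $|Z|+|B_1\cap B_2|$. Therefore
\[
|Z|+|B_1\cap B_2|\le r_{\mathcal M}\bigl(U\cup(B_1\cap B_2)\bigr)\le |U|+|B_1\cap B_2|,
\]
which gives $|U|\ge|Z|$.

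By Hall's theorem there is a matching saturating $Y$. Since $|X|=|Y|$, this matching is perfect. Each of its edges $(u,v)$ satisfies $u\in B_1\setminus B_2$, $v\in B_2\setminus B_1$ and $B_1-u+v\in\mathcal B$, so it is an arc of $E_{B_1}$, as required. The only matroid facts used are the uniqueness of fundamental circuits and the monotonicity and subcardinality of rank, together with spans. I would verify these carefully but briefly.
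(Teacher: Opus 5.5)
Your proof is correct. The paper gives no proof of its own and simply cites Schrijver's Corollary~39.12a, and your argument is the standard Hall/K\"onig proof of that result: Hall's condition follows from $Z\cup(B_1\cap B_2)\subseteq \mathrm{span}(U\cup(B_1\cap B_2))$ together with the independence of $Z\cup(B_1\cap B_2)\subseteq B_2$. One wording fix: the aside that ``no element'' of $C(B_1,v)-v$ lies in $B_1\cap B_2$ is false as written. What holds is $C(B_1,v)-v\not\subseteq B_1\cap B_2$, and nothing later in your argument depends on that aside.
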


For a base packing $\mathcal P_B$, let $N[\mathcal P_B]=\bigcup_{B\in\mathcal P_B}B$ denote its support.

Let $\mathcal P_B=\{B_1,\ldots,B_s\}$ be the positive weight bases of a fractional base packing, with weights $\lambda_1,\ldots,\lambda_s>0$, and let $G[\mathcal P_B]$ be its base exchange graph. For each $v\in N$, let $c(v)=\sum_{i:v\in B_i}\lambda_i$ be its load, and let $N_0=\{v\in N:c(v)<1\}$ be the set of elements whose capacity constraints are not tight.

Let $\mathcal Q$ be the collection of all strongly connected components (SCCs) of $G[\mathcal P_B]$, and let
\[
\mathcal Q^+=\{Q\in\mathcal Q:\text{there is no directed path from }Q\text{ to }N_0\}.
\]
For convenience, a $v$-$U$ path means a directed path from $v$ to some vertex in $U$, and a $U$-$U'$ path means a directed path from some vertex in $U$ to some vertex in $U'$. Paths of length zero are allowed. In particular, every member of $\mathcal Q^+$ is disjoint from $N_0$. When no confusion arises, we identify an SCC with its vertex set. Given the fractional base packing and its exchange graph, $N_0$, $\mathcal Q$ and $\mathcal Q^+$ can be computed in polynomial time.

The SCCs of $G[\mathcal P_B]$ form a directed acyclic condensation graph. For $Q,Q'\in\mathcal Q$, define the partial order $\preceq$ by
\[
Q\preceq Q'\quad\Longleftrightarrow\quad
\text{there is a directed path from }Q\text{ to }Q'\text{ in }G[\mathcal P_B].
\]
We use the same notation for its restriction to $\mathcal Q^+$.

\begin{remark}
A base packing $\mathcal P_B=\{B_1,\ldots,B_{\rho^*}\}$ is a fractional base packing with $\lambda_i=1$ for every $i$. In this case, $G[\mathcal P_B]$ is the base exchange graph of the integer packing, $c(v)=1$ on $N[\mathcal P_B]$ and $c(v)=0$ elsewhere. Thus $N_0=N\setminus N[\mathcal P_B]$ is precisely the set of elements not covered by the chosen maximum family of pairwise disjoint bases. Every $v\in N_0$ has out degree zero and forms a singleton SCC. These singleton components belong to $\mathcal Q$ but not to $\mathcal Q^+$.
\end{remark}

%
%

The method of defining a partially ordered set and computing the nucleolus based on its hierarchy partition is described in \cite{ALPS09}, \cite{BB20} and \cite{XF23}. This approach remains applicable to the computation of the nucleolus in matroid packing games. Therefore, we define the hierarchy partition of the partially ordered set and provide the corresponding algorithm.

When $\mathcal Q^+\ne\emptyset$, the hierarchy partition of $(\mathcal Q^+,\preceq)$ is denoted by $\mathcal Q_1^+\prec\cdots\prec\mathcal Q_t^+$. It is computed iteratively: at each step, all minimal elements of the remaining poset are removed, and the elements removed in step $i$ form $\mathcal Q_i^+$. Thus the hierarchy partition satisfies the following properties:
\begin{enumerate}
\item[(i)] $\mathcal Q_1^+$ consists of all minimal elements of $(\mathcal Q^+,\preceq)$;
\item[(ii)] For every $Q\in\mathcal Q_i^+$ with $i>1$, there exists $Q'\in\mathcal Q_{i-1}^+$ such that $Q'\preceq Q$;
\item[(iii)] For distinct $Q,Q'\in\mathcal Q_i^+$, neither $Q\preceq Q'$ nor $Q'\preceq Q$ holds.
\end{enumerate}

\begin{lemma}\label{SSC2}
The hierarchy partition of $(\mathcal Q^+,\preceq)$ can be computed in polynomial time.
\end{lemma}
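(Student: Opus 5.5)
The plan is to check that every object entering the hierarchy partition has polynomial size and can be built in polynomial time in the independence oracle model. After that, the iterative removal of minimal elements is a standard layering of a finite DAG.

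First, build $G[\mathcal P_B]$. The bases are obtained in polynomial time: a maximum base packing by \cite[Theorem~42.5]{Schr03}, or a maximum fractional packing with polynomially many positive-weight bases by \cite[Corollary~42.7a]{Schr03}. For each base $B_i$ and each pair $u\in B_i$, $v\in N\setminus B_i$, the test $B_i-u+v\in\mathcal B$ is one oracle call. Since $|B_i-u+v|=|B_i|$, independence of this set is equivalent to its being a base. So $E_{B_i}$ costs at most $|N|^2$ oracle calls, and $E[\mathcal P_B]$ costs polynomially many calls because $s$ is polynomial. Computing the loads $c(v)$ and the set $N_0$ is then immediate.

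Second, compute $\mathcal Q$ with a linear-time SCC algorithm such as Tarjan's, and form the condensation DAG $D$. To find $\mathcal Q^+$, run a reverse search (BFS on reversed arcs) from $N_0$. An SCC lies in $\mathcal Q^+$ exactly when none of its vertices is reached. The relation $\preceq$ restricted to $\mathcal Q^+$ is reachability in $G[\mathcal P_B]$. I will compute it by one BFS from each $Q\in\mathcal Q^+$ in $D$, giving the full transitive relation on $\mathcal Q^+$ in $O(|\mathcal Q|\cdot|E(D)|)$ time. Paths may pass through components outside $\mathcal Q^+$, so using full reachability rather than reachability inside $\mathcal Q^+$ is essential. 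In fact, if $Q\preceq Q'$ with $Q\in\mathcal Q^+$, then every component on the path also lies in $\mathcal Q^+$: anything reachable from $Q$ cannot reach $N_0$. So either choice works. Acyclicity of $D$ makes $\preceq$ antisymmetric, hence a genuine partial order.

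Third, run the peeling. At step $i$, declare $Q$ minimal in the remaining set $R$ if no $Q'\in R\setminus\{Q\}$ satisfies $Q'\preceq Q$. This check reads the precomputed relation table. Set $\mathcal Q_i^+$ to be these minimal elements and remove them from $R$. A finite nonempty poset always has a minimal element, so each round removes at least one element. There are therefore at most $|\mathcal Q^+|\le|N|$ rounds, each costing $O(|N|^2)$.

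Properties (i)–(iii) follow from the construction. Property (i) holds by definition. Property (iii) holds because two comparable elements cannot both be minimal in the same round. For property (ii), take $Q\in\mathcal Q_i^+$ with $i>1$. Since $Q$ was not minimal in round $i-1$, some element below it was still present then. Pick a $\preceq$-minimal element among those below $Q$ in that round; it belongs to $\mathcal Q_{i-1}^+$.

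The only non-routine step is the first, namely obtaining the exchange graph with polynomially many oracle calls. This rests on the cited polynomial bound on the number of positive-weight bases and on the equal-cardinality base test. Everything after that is standard graph algorithmics.
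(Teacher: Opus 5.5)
Your proposal is correct and matches the paper's approach: the paper gives no standalone proof of this lemma, but the proof of Theorem~\ref{fractional-nucleolus-algorithm} sketches the same pipeline. That pipeline is oracle construction of the exchange graph from polynomially many positive-weight bases, reverse search from $N_0$, SCCs and the condensation graph, transitive closure, and layer-by-layer removal of minimal elements. One small point: your own observation that $\mathcal Q^+$ is closed under reachability already shows that reachability inside $\mathcal Q^+$ and in the whole graph coincide, so you can drop the intermediate claim that using full reachability is ``essential''.
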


\begin{center}

\begin{figure}
	 \centering
  \includegraphics[width=0.65\linewidth]{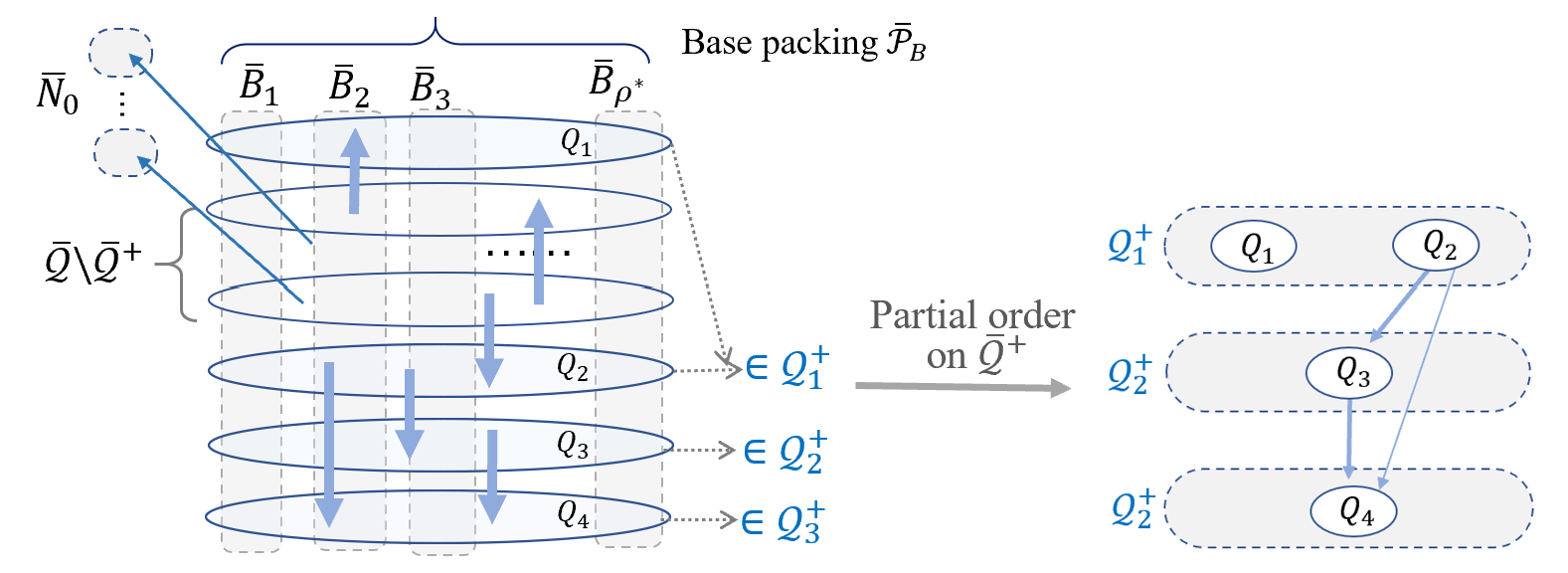}
	\caption{The hierarchy partition}
	\label{fig:one}
\end{figure}
\end{center}

\section{Characterization of the core of $\Gamma_{\mathcal{M}}$}

In this section, we characterize the core of the matroid packing game by using a fixed base packing and its exchange graph. Let \(\mathcal M=(N,\mathcal I)\) be a matroid and let \(\Gamma_{\mathcal M}=(N,\gamma)\) be the associated matroid packing game. The set of allocations of \(\Gamma_{\mathcal M}\) is \[ X(\Gamma_{\mathcal M}) = \{x\in\mathbb R_{\ge0}^{N}:x(N)=\rho^*\}. \]

\subsection{The essential coalitions}

Let $\Gamma=(N,\gamma)$ be a cooperative game. A subset $S\subseteq N$ is called an essential coalition of $\Gamma$ if either $|S|=1$, or
$
\gamma(S)>\sum_{T\in\mathcal{T}}\gamma(T)
$
for every nontrivial partition $\mathcal{T}$ of $S$, where a partition of $S$ is called trivial if it consists only of the coalition $S$ itself. We denote by $\mathcal{E}(\Gamma)$ the set of all essential coalitions of $\Gamma$. It was shown in \cite{Gur} that for the game $\Gamma=(N,\gamma)$:
\begin{enumerate}
\item[(1)] The core of $\Gamma$ can be determined only by essential coalitions;
\item[(2)] When the core is nonempty, dropping the constraints associated with inessential coalitions does not change the result of $SLP(\eta(\Gamma))$ for computing the nucleolus. That is, the nucleolus can be determined completely by essential coalitions.
\end{enumerate}

\begin{lemma}\label{core 1}
The set of essential coalitions of the matroid packing game $\Gamma_{\mathcal{M}}$ consists of all singletons and all coalitions corresponding to bases of $\mathcal{M}$. That is,
\[
\mathcal{E}(\Gamma_{\mathcal{M}})
=
\{\{v\}:\,v\in N\}\cup\{B:B\in\mathcal{B}\}.
\]
\end{lemma}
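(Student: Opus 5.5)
We prove the two inclusions separately, using only the definition of essential coalition and the fact that $\gamma(S)$ is the maximum number of pairwise disjoint bases contained in $S$.

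For the inclusion ``$\supseteq$'', singletons are essential by definition. Let $B\in\mathcal B$ and let $\mathcal T$ be a nontrivial partition of $B$. Every part $T\in\mathcal T$ is then a proper subset of $B$. All bases have cardinality $r_{\mathcal M}(N)$, so $T$ contains no base and $\gamma(T)=0$. Hence $\sum_{T\in\mathcal T}\gamma(T)=0<1=\gamma(B)$, where $\gamma(B)\ge 1$ since $B$ contains itself. Therefore $B$ is essential.

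For the inclusion ``$\subseteq$'', let $S$ be essential with $|S|\ge 2$, and set $k=\gamma(S)$. We split into cases according to $k$.
\begin{enumerate}
\item[(a)] If $k=0$, take the partition of $S$ into singletons. It is nontrivial because $|S|\ge2$, and its total value is $0$, which is not less than $\gamma(S)=0$. This contradicts essentiality.
\item[(b)] If $k\ge1$, choose pairwise disjoint bases $B_1,\dots,B_k\subseteq S$. Suppose $S$ is not itself a base. Form the partition
\[
\mathcal T=\{B_1,\dots,B_{k-1},\ S\setminus (B_1\cup\dots\cup B_{k-1})\}.
\]
The last part contains $B_k$, so it has value at least $1$. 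Each $B_i$ has value at least $1$. Hence $\sum_{T\in\mathcal T}\gamma(T)\ge k=\gamma(S)$, contradicting essentiality, provided $\mathcal T$ is nontrivial.
\item[(c)] The partition $\mathcal T$ in (b) is trivial only when $k=1$. In that case $S\supsetneq B_1$, since $S$ is assumed not to be a base. We then use instead the partition $\{B_1,\ S\setminus B_1\}$. Its total value is at least $1=\gamma(S)$, again a contradiction.
\end{enumerate}
Thus every essential $S$ with $|S|\ge 2$ is a base. This closes the inclusion.

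There is essentially no obstacle. The only points needing care are the nontriviality of the chosen partition and the observation that proper subsets of a base contain no base, because all bases have the same cardinality. If one prefers to avoid the separate case (c), the single partition $\{B_1,\ S\setminus B_1\}$ works for every $k\ge1$. Indeed, $S\setminus B_1$ still contains the $k-1$ disjoint bases $B_2,\dots,B_k$, so the total value is at least $1+(k-1)=k$. This partition is nontrivial whenever $S\ne B_1$. So we will use that uniform choice in the write-up.
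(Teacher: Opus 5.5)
Your proof is correct and complete. Proper subsets of a base contain no base by cardinality, so every nontrivial partition of a base has value $0<1$; a non-base $S$ with $|S|\ge 2$ is refuted by any nontrivial partition when $\gamma(S)=0$ (by monotonicity of $\gamma$), or by $\{B_1,S\setminus B_1\}$, whose value is at least $1+(\gamma(S)-1)$, when $\gamma(S)\ge 1$. The paper states this lemma without a written proof, and your argument is exactly the routine verification it leaves implicit.
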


Following directly from Lemma \ref{core 1}, we have the following characterization.

\begin{theorem}\label{core 1.1}
Let $\Gamma_{\mathcal{M}}=(N,\gamma)$ be the matroid packing game. Then
\[
\mathcal{C}(\Gamma_{\mathcal{M}})
=
\{x\in X(\Gamma_{\mathcal{M}}):x(B)\geq 1,\ \forall B\in\mathcal{B}\}.
\]
Furthermore, for each $x\in\mathcal{C}(\Gamma_{\mathcal{M}})$ and each base packing $\mathcal{P}_{B}=\{B_{1},...,B_{\rho^{*}}\}$, it holds that:
\begin{enumerate}
\item[(a)] $x(B_{i})=1$ for all $i\in\{1,2,...,\rho^{*}\}$, and
\item[(b)] $x(v)=0$ for all $v\in N\setminus N[\mathcal P_B]$.
\end{enumerate}
\end{theorem}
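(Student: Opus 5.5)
The plan is to derive the displayed description of the core from Lemma~\ref{core 1} together with the fact (property (1) from \cite{Gur}) that the core is determined by the essential coalitions. Then we read off (a) and (b) from a counting argument along a fixed maximum base packing.

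\emph{The description of the core.} By Lemma~\ref{core 1}, the essential coalitions are the singletons and the bases. For a singleton $\{v\}$ we have $\gamma(\{v\})=0$, since $N$ is not a base and $r_{\mathcal M}(N)>0$ forces every base to have size at least one. If a base had size one, $\{v\}$ would itself be a base and essential; then $\gamma(\{v\})=1$ and its constraint $x(v)\ge 1$ is exactly the base constraint. In all other cases the singleton constraints reduce to $x\ge 0$, which is already part of $X(\Gamma_{\mathcal M})$. For a base $B$ we have $\gamma(B)=1$, giving $x(B)\ge 1$. Hence $\mathcal C(\Gamma_{\mathcal M})$ is the stated set.

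To be safe, I will also verify the inclusion ``$\supseteq$'' directly. Take $x\ge 0$ with $x(B)\ge 1$ for all $B\in\mathcal B$, and any $S\subseteq N$. Let $B'_1,\dots,B'_{\gamma(S)}\subseteq S$ be pairwise disjoint bases. Then
\[
x(S)\ \ge\ \sum_j x(B'_j)\ \ge\ \gamma(S),
\]
using nonnegativity on $S\setminus\bigcup_j B'_j$.

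\emph{Properties (a) and (b).} Fix $x$ in the core and a maximum packing $\mathcal P_B=\{B_1,\dots,B_{\rho^*}\}$. Since the $B_i$ are disjoint and $x\ge 0$,
\[
\rho^*=x(N)=\sum_{i=1}^{\rho^*}x(B_i)+x\bigl(N\setminus N[\mathcal P_B]\bigr)\ \ge\ \rho^*+x\bigl(N\setminus N[\mathcal P_B]\bigr)\ \ge\ \rho^*,
\]
where the middle inequality uses $x(B_i)\ge 1$. All inequalities are therefore equalities. Equality in the first gives $x(B_i)=1$ for every $i$, which is (a). Equality in the second gives $x(N\setminus N[\mathcal P_B])=0$, which together with $x\ge 0$ yields (b).

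\emph{Main obstacle.} The argument is routine. The only delicate point is the reduction to essential coalitions: one must check that the singleton constraints contribute nothing beyond nonnegativity, and this is handled by the direct ``$\supseteq$'' verification above.
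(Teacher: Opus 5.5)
Your proof is correct and follows the paper's argument: the core is reduced to the base constraints via Lemma~\ref{core 1} (a singleton constraint is either nonnegativity or a base constraint), and (a), (b) come from the same equality-forcing count over the disjoint bases of $\mathcal P_B$. Your direct check of ``$\supseteq$'', together with the trivial inclusion from $x(B)\ge\gamma(B)=1$, already proves the displayed equality on its own, so the appeal to essential coalitions could be dropped; so could the opening claim $\gamma(\{v\})=0$, which fails for non-loops when $r_{\mathcal M}(N)=1$ and which you then have to qualify.
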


\begin{proof}
By Lemma \ref{core 1} and the nonnegativity constraints in $X(\Gamma_{\mathcal M})$, the core is determined by the constraints corresponding to bases. Indeed, singleton constraints are either nonnegativity constraints, or are already included among the base constraints when a singleton is a base. Hence
\[
\mathcal{C}(\Gamma_{\mathcal{M}})
=
\{x\in X(\Gamma_{\mathcal{M}}):x(B)\geq 1,\ \forall B\in\mathcal{B}\}.
\]

Let $x\in\mathcal{C}(\Gamma_{\mathcal{M}})$ and let $\mathcal{P}_{B}=\{B_{1},...,B_{\rho^{*}}\}$ be a base packing. Since the bases in $\mathcal P_B$ are pairwise disjoint, we have
\[
x(N)
=
\sum_{i=1}^{\rho^{*}}x(B_{i})
+
\sum_{v\in N\setminus N[\mathcal P_B]}x(v)
\geq
\rho^{*}
=
\gamma(N).
\]
Since $x(N)=\gamma(N)=\rho^*$,  $x(B_i)\geq 1$ for  $i=1,\dots, \rho^*$ and $x(v)\geq 0$ for  $v\in N$, we have
$
x(B_i)=1,\quad i=1,\ldots,\rho^*,
$
and
$
x(v)=0,\quad \forall v\in N\setminus N[\mathcal P_B].
$
\end{proof}

\subsection{The characterizations  and algorithms for the core}

Now we employ a fixed base packing to establish a more refined characterization of the core $\mathcal{C}(\Gamma_{\mathcal{M}})$. Let
\[
\bar{\mathcal{P}}_{B}=\{\bar{B}_{1},\bar{B}_{2},...,\bar{B}_{\rho^{*}}\}
\]
be a fixed base packing, let $G[\bar{\mathcal{P}}_{B}]$ be the related exchange graph,
$
N[\bar{\mathcal{P}}_{B}]
=
\bigcup^{\rho^{*}}_{i=1}\bar{B}_{i}
$ and
$\bar{N}_{0}=N\setminus N[\bar{\mathcal{P}}_{B}].
$
Let $\bar{\mathcal{Q}}$, $\bar{\mathcal{Q}}^{+}$ and the  partial order ``$\preceq$'' be defined as in Section 2 with respect to the graph $G[\bar{\mathcal{P}}_{B}]$. Concretely, for SCCs $Q,Q^{\prime}\in\bar{\mathcal{Q}}^{+}$, we write $Q\preceq Q^{\prime}$ if and only if there is a directed $Q$-$Q^{\prime}$ path  in $G[\bar{\mathcal{P}}_{B}]$. A graph characterization of $\mathcal{C}(\Gamma_{\mathcal{M}})$ is established by Theorem \ref{core 2}.

\begin{theorem}\label{core 2}
Let $x\in \mathbb{R}^{N}_{\geq 0}$ and let $\bar{\mathcal{P}}_{B}=\{\bar{B}_{1},\bar{B}_{2},...,\bar{B}_{\rho^{*}}\}$ be a fixed base packing. Then $x\in\mathcal{C}(\Gamma_{\mathcal{M}})$ if and only if
\begin{enumerate}
\item[(a)] $x(N)=\gamma(N)$;
\item[(b)] $x(v)=0$ for $v\in\bar{N}_{0}$;
\item[(c)] $x(u)\leq x(v)$ for $(u,v)\in E[\bar{\mathcal{P}}_{B}]$.
\end{enumerate}
\end{theorem}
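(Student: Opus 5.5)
We prove both directions using Theorem \ref{core 1.1} together with the exchange structure from Lemma \ref{matroid exchange 1}.

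\emph{Necessity.} Let $x\in\mathcal C(\Gamma_{\mathcal M})$. Condition (a) holds by definition of an allocation, and (b) is Theorem \ref{core 1.1}(b) applied to $\bar{\mathcal P}_B$. For (c), take an arc $(u,v)\in E[\bar{\mathcal P}_B]$. Then $(u,v)\in E_{\bar B_i}$ for some $i$, so $u\in\bar B_i$, $v\notin\bar B_i$, and $B'=\bar B_i-u+v\in\mathcal B$. Theorem \ref{core 1.1} gives $x(B')\ge 1$ and $x(\bar B_i)=1$. Hence
\[
x(v)-x(u)=x(B')-x(\bar B_i)\ge 0 .
\]

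\emph{Sufficiency.} Assume (a)--(c) and $x\ge0$. By Theorem \ref{core 1.1} it suffices to show $x(B)\ge 1$ for every $B\in\mathcal B$. The plan has two steps.

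First, show $x(\bar B_i)=1$ for each $i$. By (a) and (b),
\[
\sum_i x(\bar B_i)=x(N)=\rho^*,
\]
so it is enough to prove $x(\bar B_i)=x(\bar B_j)$ for all $i,j$. Apply Lemma \ref{matroid exchange 1} to $\bar B_i$ and $\bar B_j$. It gives a perfect matching in $E_{\bar B_i}$ from $\bar B_i\setminus\bar B_j$ to $\bar B_j\setminus\bar B_i$, and (c) yields $x(\bar B_i\setminus\bar B_j)\le x(\bar B_j\setminus\bar B_i)$. Since the bases are disjoint, this reads $x(\bar B_i)\le x(\bar B_j)$. Exchanging the roles of $i$ and $j$ gives the reverse inequality, so all $x(\bar B_i)$ are equal, and therefore each equals $1$.

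Second, take an arbitrary base $B$. The main obstacle is that $B$ may meet several $\bar B_i$, as well as $\bar N_0$, so no single exchange graph $G_{\bar B_i}$ relates $B$ to a packing base directly. The approach is to choose one index $i$ and compare $B$ with $\bar B_i$ alone. Lemma \ref{matroid exchange 1} gives a perfect matching in $E_{\bar B_i}$ between $\bar B_i\setminus B$ and $B\setminus\bar B_i$. All its arcs lie in $E[\bar{\mathcal P}_B]$, and this holds regardless of whether the endpoints in $B\setminus\bar B_i$ belong to other packing bases or to $\bar N_0$. Condition (c) then gives
\[
x(\bar B_i\setminus B)\le x(B\setminus\bar B_i).
\]
Adding $x(B\cap\bar B_i)$ to both sides yields
\[
x(B)\ge x(\bar B_i)=1 .
\]
Together with $x\ge0$ and (a), this places $x$ in the core by Theorem \ref{core 1.1}.

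The only point to verify carefully is that Lemma \ref{matroid exchange 1} is applied with the packing base $\bar B_i$ as its first argument. This ensures the matching arcs lie in $E_{\bar B_i}\subseteq E[\bar{\mathcal P}_B]$, which is exactly where (c) is assumed.
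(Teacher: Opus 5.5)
Your proof is correct and follows essentially the same route as the paper: necessity via the exchange $\bar B_i-u+v$, and sufficiency via the perfect matching of Lemma~\ref{matroid exchange 1} in $E_{\bar B_i}$ to get $x(B)\geq x(\bar B_i)$, with (a), (b) and disjointness then forcing $x(\bar B_i)=1$. The only difference is cosmetic: you first prove $x(\bar B_i)=x(\bar B_j)$ as a separate step, whereas the paper gets it at once by observing that $x(B)\geq x(\bar B_i)$ for every base $B$ makes each $\bar B_i$ a minimum-weight base.
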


\begin{proof}
Let $x\in\mathcal C(\Gamma_{\mathcal M})$. Condition (a) follows from $x\in X(\Gamma_{\mathcal M})$, and (b) follows from Theorem~\ref{core 1.1}. To prove (c), let $(u,v)\in E[\bar{\mathcal P}_B]$. There exists $\bar B_i\in\bar{\mathcal P}_B$ such that $B'=\bar B_i-u+v\in\mathcal B$. By Theorem~\ref{core 1.1}, $x(\bar B_i)=1$ and $x(B')\geq1$. Thus $x(v)-x(u)=x(B')-x(\bar B_i)\geq0$.

Conversely, let $x\in\mathbb R_{\geq0}^N$ satisfy (a), (b) and (c). For every $i\in\{1,\ldots,\rho^*\}$ and every $B\in\mathcal B$, Lemma~\ref{matroid exchange 1} gives a perfect matching from $\bar B_i\setminus B$ to $B\setminus\bar B_i$ in $E_{\bar B_i}$. Summing (c) over this matching yields $x(B)\geq x(\bar B_i)$. Hence all bases in $\bar{\mathcal P}_B$ are minimum weight bases and have the same weight. Since they are pairwise disjoint, (a) and (b) give
\[
\rho^*=\gamma(N)=x(N)=\sum_{i=1}^{\rho^*}x(\bar B_i)=\rho^*x(\bar B_1).
\]
Therefore $x(\bar B_i)=1$ for every $i$, and $x(B)\geq1$ for every $B\in\mathcal B$. By (a), $x\in X(\Gamma_{\mathcal M})$, so Theorem~\ref{core 1.1} gives $x\in\mathcal C(\Gamma_{\mathcal M})$.
\end{proof}

\begin{corollary}\label{core 2.1}
Given a base packing $\bar{\mathcal{P}}_{B}=\{\bar{B}_{1},\bar{B}_{2},...,\bar{B}_{\rho^{*}}\}$ and its related exchange graph $G[\bar{\mathcal{P}}_{B}]$, if $x\in\mathcal{C}(\Gamma_{\mathcal{M}})$, then
\begin{enumerate}
\item[(a)] If $u\in Q$, $v\in Q^{\prime}$ and $Q\preceq Q^{\prime}$, then $x(u)\leq x(v)$;
\item[(b)] If $u,v$ belong to the same SCC $Q\in\bar{\mathcal{Q}}$, then $x(u)=x(v)$. Henceforth we write $x_{Q}$ to denote the common value of the vertices in $Q$, i.e.,
\[
x_{Q}=\frac{x(Q)}{|Q|};
\]
\item[(c)] If there is a directed path from $v$ to $\bar{N}_{0}$ in $G[\bar{\mathcal{P}}_{B}]$, then $x(v)=0$. Equivalently,
\[
x(v)=0,\quad \forall v\in Q,\ Q\in\bar{\mathcal{Q}}\setminus\bar{\mathcal{Q}}^{+}.
\]
\end{enumerate}
\end{corollary}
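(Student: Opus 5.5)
The plan is to derive all three parts from condition (c) of Theorem~\ref{core 2}, which says that $x$ is monotone nondecreasing along every arc of $G[\bar{\mathcal P}_B]$, by propagating this inequality along directed paths. First I record one observation. Let $w_0,w_1,\ldots,w_k$ be a directed path in $G[\bar{\mathcal P}_B]$; length zero is allowed. Each arc $(w_{j-1},w_j)$ lies in $E[\bar{\mathcal P}_B]$, so Theorem~\ref{core 2}(c) gives $x(w_{j-1})\le x(w_j)$. Chaining these inequalities yields $x(w_0)\le x(w_k)$. So whenever $v$ is reachable from $u$, we have $x(u)\le x(v)$.

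For (a), suppose $u\in Q$, $v\in Q'$ and $Q\preceq Q'$. By definition there is a directed path from some $a\in Q$ to some $b\in Q'$. Strong connectivity of $Q$ gives a path from $u$ to $a$, and strong connectivity of $Q'$ gives a path from $b$ to $v$. Concatenating the three paths gives a $u$-$v$ path, and the observation yields $x(u)\le x(v)$.

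For (b), take $u,v$ in the same SCC $Q$. Then each is reachable from the other, so the observation applies in both directions and $x(u)=x(v)$. Consequently all vertices of $Q$ share one value, and it equals the average $x(Q)/|Q|$; this justifies the notation $x_Q$.

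For (c), suppose there is a directed path from $v$ to some $w\in\bar N_0$. The observation gives $x(v)\le x(w)$. Theorem~\ref{core 2}(b) gives $x(w)=0$, and $x\ge 0$ by hypothesis, so $x(v)=0$. For the equivalent reformulation, take $Q\in\bar{\mathcal Q}\setminus\bar{\mathcal Q}^+$. By definition of $\bar{\mathcal Q}^+$, some vertex of $Q$ has a path to $\bar N_0$. Strong connectivity then extends this to a path from every $v\in Q$, so $x(v)=0$ on all of $Q$. This also covers the singleton components of $\bar N_0$ itself, via paths of length zero.

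No step is a real obstacle. The only point needing care is the bookkeeping when paths are concatenated through SCCs, including the zero-length cases.
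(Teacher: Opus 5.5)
Your proposal is correct and takes essentially the paper's route. The paper gives no separate proof here and treats the corollary as immediate from Theorem~\ref{core 2}, but its written proof of the parallel Corollary~\ref{fractional-canonical-partition} does exactly what you do: it chains the arc inequalities along directed paths, uses mutual reachability for (b), and for (c) combines $0\le x(v)\le x(w)=0$ with $w\in\bar N_0$.
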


Theorem~\ref{core 2} and Corollary~\ref{core 2.1} suggest a natural way to construct a core allocation when $\bar{\mathcal Q}^{+}\neq\emptyset$. Let $q=|\bigcup_{Q\in\bar{\mathcal Q}^{+}}Q|$, assign the value $\rho^*/q$ to every vertex in this union, and assign zero to all remaining vertices. The resulting vector is nonnegative and satisfies $x(N)=\gamma(N)$. Since no directed edge leaves this union, conditions (b) and (c) of Theorem~\ref{core 2} also hold.
A concise necessary and sufficient condition for the nonemptiness of the core is obtained as follows.

\begin{theorem}\label{core 3}
For a fixed base packing $\bar{\mathcal P}_B$, the core $\mathcal C(\Gamma_{\mathcal M})$ is nonempty if and only if
\[
\bar{\mathcal{Q}}^{+}\neq\emptyset.
\]
\end{theorem}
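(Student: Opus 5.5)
We prove the two directions separately, using Theorem~\ref{core 2} and Corollary~\ref{core 2.1} as the main tools.

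\emph{Sufficiency.} Assume $\bar{\mathcal Q}^{+}\neq\emptyset$ and use the construction described just before the statement. Let $U=\bigcup_{Q\in\bar{\mathcal Q}^{+}}Q$ and $q=|U|>0$. Set $x(v)=\rho^*/q$ for $v\in U$ and $x(v)=0$ otherwise. We check the three conditions of Theorem~\ref{core 2}.
\begin{itemize}
\item[(a)] This holds by construction, since $x(N)=q\cdot\rho^*/q=\rho^*$.
\item[(b)] Every member of $\bar{\mathcal Q}^{+}$ is disjoint from $\bar N_0$, so $x$ vanishes on $\bar N_0$.
\item[(c)] We need $U$ to be closed under out-arcs. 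Take an arc $(u,v)$ with $u\in U$ and suppose $v\notin U$. Then the SCC of $v$ lies in $\bar{\mathcal Q}\setminus\bar{\mathcal Q}^{+}$, so there is a path from $v$ to $\bar N_0$. Prepending the arc $(u,v)$ gives a path from $u$ to $\bar N_0$, contradicting $u\in U$. Hence every arc with tail in $U$ has head in $U$. An arc with tail outside $U$ has $x(u)=0\le x(v)$. In all cases $x(u)\le x(v)$.
\end{itemize}
Theorem~\ref{core 2} then gives $x\in\mathcal C(\Gamma_{\mathcal M})$.

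\emph{Necessity.} Suppose $\bar{\mathcal Q}^{+}=\emptyset$ and some $x\in\mathcal C(\Gamma_{\mathcal M})$ exists. Every SCC then has a path to $\bar N_0$, so Corollary~\ref{core 2.1}(c) gives $x(v)=0$ for all $v\in N$. This yields $x(N)=0$. But $\rho^*\ge 1$, since $N$ contains a base. This contradicts $x(N)=\rho^*$, so the core is empty.

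The proof is essentially routine; the only step needing care is the closure of $U$ under out-arcs in sufficiency. Corollary~\ref{core 2.1}(c) was stated without proof, so it may need a brief justification here. Along a path $v=w_0,w_1,\dots,w_k\in\bar N_0$, condition (c) of Theorem~\ref{core 2} gives $x(w_0)\le x(w_1)\le\cdots\le x(w_k)$. Condition (b) gives $x(w_k)=0$, and nonnegativity then forces $x(v)=0$.
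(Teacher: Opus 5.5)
Your proof is correct and follows the paper's route. For sufficiency, you use the uniform allocation $\rho^*/q$ on $\bigcup_{Q\in\bar{\mathcal Q}^+}Q$ and check it against Theorem~\ref{core 2}; the paper only asserts that no arc leaves this union, and you supply the justification. For necessity, Corollary~\ref{core 2.1}(c) forces $x\equiv 0$, contradicting $x(N)=\rho^*\geq 1$.
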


Combining Theorems~\ref{core 2} and~\ref{core 3}, we obtain the following result.

\begin{theorem}
Let $\Gamma_{\mathcal{M}}=(N,\gamma)$ be the matroid packing game defined on the matroid $\mathcal{M}=(N,\mathcal{I})$. Then, for the core of $\Gamma_{\mathcal{M}}$, the problems of testing nonemptiness, checking membership, and finding a core member can all be solved in polynomial time.
\end{theorem}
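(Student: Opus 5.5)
The plan is to reduce all three problems to the combinatorial characterization in Theorems~\ref{core 2} and~\ref{core 3}. The only nontrivial primitive is a maximum base packing $\bar{\mathcal P}_B=\{\bar B_1,\ldots,\bar B_{\rho^*}\}$, which can be computed in polynomial time in the independence oracle model by \cite[Theorem~42.5]{Schr03}. Given it, I would build the exchange graph $G[\bar{\mathcal P}_B]$ explicitly. For each $i\le\rho^*\le|N|$ and each pair $u\in\bar B_i$, $v\in N\setminus\bar B_i$, one oracle call tests whether $\bar B_i-u+v\in\mathcal I$. Since $|\bar B_i-u+v|=|\bar B_i|=r_{\mathcal M}(N)$, independence is equivalent to being a base. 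This costs $O(|N|^3)$ oracle calls in total. From the support we read off $\bar N_0$, and standard linear-time algorithms give the SCCs $\bar{\mathcal Q}$. A reverse search from $\bar N_0$ then marks every vertex that can reach $\bar N_0$, which yields $\bar{\mathcal Q}^+$.

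\emph{Testing nonemptiness.} By Theorem~\ref{core 3}, $\mathcal C(\Gamma_{\mathcal M})\neq\emptyset$ if and only if $\bar{\mathcal Q}^+\neq\emptyset$. This is read off directly from the preprocessing.

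\emph{Finding a core member.} When $\bar{\mathcal Q}^+\ne\emptyset$, I would output the vector described before Theorem~\ref{core 3}: assign $\rho^*/q$ to every element of $U=\bigcup_{Q\in\bar{\mathcal Q}^+}Q$, where $q=|U|$, and assign $0$ elsewhere. To verify it, I would check the conditions of Theorem~\ref{core 2}:
\begin{enumerate}
\item[(a)] holds by construction, since $x(N)=q\cdot\rho^*/q=\rho^*$;
\item[(b)] holds because $U\cap\bar N_0=\emptyset$;
\item[(c)] holds because no arc leaves $U$. If $(u,v)$ were an arc with $u\in U$ and $v\notin U$, then $v$ would reach $\bar N_0$, hence so would $u$, contradicting $u\in U$. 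So every arc has either both ends in $U$, giving equal values, or its tail outside $U$, where the value is $0\le x(v)$.
\end{enumerate}

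\emph{Checking membership.} Given $x$, first test $x\ge0$, then conditions (a), (b), (c) of Theorem~\ref{core 2} against the fixed graph. There are polynomially many arcs, so this takes polynomially many comparisons.

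The step needing the most care is the point that Theorem~\ref{core 2} holds for \emph{any} fixed maximum packing. This matters because the algorithm uses whichever packing the oracle algorithm returns. That theorem is already stated for an arbitrary maximum packing, so no further argument is needed; the remaining work is bookkeeping of oracle calls and arithmetic on rationals of polynomial size.
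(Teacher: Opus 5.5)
Your proposal is correct and follows the paper's route: the paper obtains this theorem by combining Theorem~\ref{core 2}, applied to any polynomially computable maximum packing, with Theorem~\ref{core 3} and the explicit vector assigning $\rho^*/q$ on $\bigcup_{Q\in\bar{\mathcal Q}^+}Q$. Your oracle-call count and your check that no arc leaves that union supply routine details the paper leaves implicit.
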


\section{The nucleolus when $\mathcal{C}(\Gamma_{\mathcal{M}})\neq\emptyset$}

This section is devoted to an efficient algorithm for computing the nucleolus of a matroid packing game with nonempty core. Our methodology relies on the exchange graph characterization of the core established in the previous section. We show that the nucleolus $\eta(\Gamma_{\mathcal{M}})$ is determined by the first two linear programs $(\textnormal{LP})_1$ and $(\textnormal{LP})_2$ in $SLP(\eta(\Gamma_{\mathcal M}))$ defined in \eqref{nucleolus LP}.

For $(\textnormal{LP})_1$, we have $\varepsilon_1=0$ and
$
X_1=\mathcal{C}(\Gamma_{\mathcal{M}}).
$
Indeed, since the core is nonempty, $\varepsilon\geq 0$  and $X_{1}\subseteq \mathcal{C}(\Gamma_{\mathcal{M}})$. On the other hand, for any base packing $\mathcal P_B=\{B_1,\ldots,B_{\rho^*}\}$ and $x\in \mathcal{C}(\Gamma_{\mathcal{M}})$, the condition  $x(B_i)= 1$  implies
that $\varepsilon=0$ and $X_{1}=\mathcal{C}(\Gamma_{\mathcal{M}})$.

As established in \cite{Gur}, when the core is nonempty, the nucleolus is fully determined by the essential coalitions. For the matroid packing game $\Gamma_{\mathcal{M}}$, Lemma \ref{core 1} shows that the essential coalitions are exactly the singletons and the bases. Consequently, $(\textnormal{LP})_2$ in $\eta(\Gamma_{\mathcal{M}})$ can be written as follows:
\[
{\text{(LP)}}_2: \quad
\begin{array}{rl}
\mbox{max} & \varepsilon \\[1mm]
\mbox{s.t.} &
\left\{
\begin{array}{ll}
x(v)\geq \varepsilon
& \quad \forall v\in N\setminus\displaystyle N_1, \\[2mm]
x(B)\geq 1+\varepsilon
& \quad \forall B\in {\mathcal B}\setminus\displaystyle{\mathcal B}_1, \\[2mm]
x\in \mathcal{C}(\Gamma_{\mathcal{M}}).
\end{array}
\right.
\end{array}
\]
Here
$
N_1=\{v\in N:x(v)=0,\ \forall x\in \mathcal{C}(\Gamma_{\mathcal{M}})\},
$
and
$
{\mathcal B}_1=\{B\in\mathcal B:x(B)=1,\ \forall x\in \mathcal{C}(\Gamma_{\mathcal{M}})\}.
$

Let
$
\bar{\mathcal{P}}_{B}=\{\bar{B}_{1},\bar{B}_{2},\ldots,\bar{B}_{\rho^{*}}\}
$
be a fixed base packing. For an arbitrary $B\in\mathcal{B}$, by Lemma \ref{matroid exchange 1}, there is a perfect matching
$
\pi=\{(u_{1},v_{1}),\ldots,(u_{t},v_{t})\}
$ from $\bar{B}_{1}\setminus B$ to $B\setminus\bar{B}_{1}$. Hence, for  $x\in \mathcal{C}(\Gamma_{\mathcal{M}})$,
\begin{equation}\label{eq:excess.base}
x(B)
=
x({\bar B}_1)+\sum_{k=1}^t\left[x(v_k)-x(u_k)\right]
=
1+\sum_{k=1}^t\left[x(v_k)-x(u_k)\right],
\end{equation}
where $x(\bar{B}_{1})=1$ by Theorem \ref{core 1.1}. Thus, $x(B)$ is determined by the local differences $x(v)-x(u)$ along arcs of $E[\bar{\mathcal{P}}_{B}]$.

This leads to the following polynomial size linear program:
\[
(\textnormal{LP})'_2:\quad
\begin{array}{rl}
\mbox{max} & \varepsilon \\[1mm]
\mbox{s.t.} &
\left\{
\begin{array}{ll}
x(v)=0
& \quad \forall v\in {N}_{*}, \\[1mm]
x(v)\geq \varepsilon
& \quad \forall v\in N\setminus {N}_{*}, \\[1mm]
x(v)-x(u)=0
& \quad \forall (u,v)\in {E}_{*}, \\[1mm]
x(v)-x(u)\geq \varepsilon
& \quad \forall (u,v)\in E[\bar{\mathcal{P}}_{B}]\setminus {E}_{*}, \\[1mm]
x(N)=\gamma(N).
\end{array}
\right.
\end{array}
\]
Here
$
N_{*}
=
\{v\in N:x(v)=0,\ \forall x\in\mathcal{C}(\Gamma_{\mathcal{M}})\}=N_1
$
and
$
E_{*}
=
\{(u,v)\in E[\bar{\mathcal{P}}_{B}]:x(v)-x(u)=0,\ \forall x\in\mathcal{C}(\Gamma_{\mathcal{M}})\}.
$

\begin{lemma}\label{noncore nucleolus eq 1}
$(\textnormal{LP})'_2$ is equivalent to $(\textnormal{LP})_2$ in $SLP(\eta(\Gamma_{\mathcal M}))$.
\end{lemma}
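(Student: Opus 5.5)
We would prove the lemma by showing that, once we restrict to $\varepsilon\geq 0$, the two programs have exactly the same feasible pairs $(x,\varepsilon)$. Hence they have the same optimal value $\varepsilon_2$ and the same set of optimal allocations. The restriction is harmless for both programs. Any $x\in\mathcal C(\Gamma_{\mathcal M})$ together with $\varepsilon=0$ is feasible for $(\textnormal{LP})_2$. Such an $x$ is also feasible for $(\textnormal{LP})'_2$: it satisfies $x(v)\ge 0$, it satisfies $x(v)-x(u)\geq 0$ on every arc by Theorem~\ref{core 2}(c), and it satisfies the equality constraints on $N_*$ and $E_*$ by their definition. So both optima are $\geq 0$, and we may compare the feasible regions only for $\varepsilon\geq 0$.

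The key observation links the fixed arcs to the fixed bases. Let $(u,v)\in E[\bar{\mathcal P}_B]$ and choose $i$ with $B'=\bar B_i-u+v\in\mathcal B$. For every core allocation $x$, Theorem~\ref{core 1.1} gives $x(\bar B_i)=1$, so
\[
x(B')-1=x(v)-x(u).
\]
Consequently $(u,v)\in E_*$ if and only if $B'\in\mathcal B_1$.

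\emph{From $(\textnormal{LP})_2$ to $(\textnormal{LP})'_2$.} Let $(x,\varepsilon)$ be feasible for $(\textnormal{LP})_2$, so $x\in\mathcal C(\Gamma_{\mathcal M})$.
\begin{enumerate}
\item[(i)] $x(N)=\gamma(N)$ holds because $x$ is an allocation.
\item[(ii)] $x(v)=0$ on $N_*=N_1$ holds by the definition of $N_1$.
\item[(iii)] $x(v)\geq\varepsilon$ on $N\setminus N_*$ is a constraint of $(\textnormal{LP})_2$ itself.
\item[(iv)] For $(u,v)\in E_*$, the equality $x(v)=x(u)$ holds by the definition of $E_*$.
\item[(v)] For $(u,v)\notin E_*$, the key observation gives $B'\notin\mathcal B_1$. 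Hence $x(v)-x(u)=x(B')-1\geq\varepsilon$.
\end{enumerate}

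\emph{From $(\textnormal{LP})'_2$ to $(\textnormal{LP})_2$.} Let $(x,\varepsilon)$ be feasible for $(\textnormal{LP})'_2$ with $\varepsilon\ge 0$. Then $x\geq 0$ and $x(N)=\gamma(N)$. By Theorem~\ref{core 1.1}(b), every element of $\bar N_0$ is zero in all core allocations, so $\bar N_0\subseteq N_*$ and $x$ vanishes on $\bar N_0$. Also $x(v)-x(u)\geq 0$ on every arc. Theorem~\ref{core 2} therefore gives $x\in\mathcal C(\Gamma_{\mathcal M})$. The singleton constraints $x(v)\ge\varepsilon$ for $v\notin N_1$ are already present, since $N_1=N_*$.

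The main step is the base constraints. Fix $B\in\mathcal B\setminus\mathcal B_1$, and apply Lemma~\ref{matroid exchange 1} to obtain a perfect matching $\pi=\{(u_k,v_k)\}$ in $E_{\bar B_1}\subseteq E[\bar{\mathcal P}_B]$. Equation~\eqref{eq:excess.base} then gives
\[
x(B)=1+\sum_k\left[x(v_k)-x(u_k)\right].
\]
Suppose every arc of $\pi$ lay in $E_*$. Applying the same identity to an arbitrary core allocation would give $x'(B)=1$ for all $x'\in\mathcal C(\Gamma_{\mathcal M})$, that is, $B\in\mathcal B_1$, a contradiction. So at least one arc of $\pi$ lies outside $E_*$ and contributes at least $\varepsilon$. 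Every other arc contributes at least $0$, because $\varepsilon\geq 0$. Hence $x(B)\geq 1+\varepsilon$, and $(x,\varepsilon)$ is feasible for $(\textnormal{LP})_2$.

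We expect this last step to be the main obstacle. It needs both the nonnegativity of $\varepsilon$ (so that the arcs in $E_*$ and the remaining arcs do not pull the sum down) and the fact that the matching comes from a single fixed base $\bar B_1$ of the packing, so that the identity \eqref{eq:excess.base} holds uniformly for all core allocations.
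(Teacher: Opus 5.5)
Your proposal is correct and follows essentially the same route as the paper. Both arguments obtain core membership from Theorem~\ref{core 2} when $\varepsilon\ge 0$, bound the base constraints through the perfect matching from $\bar B_1$ and~\eqref{eq:excess.base}, and recover the arc constraints from the exchanged base $\bar B_i-u+v$. Your formulation as an equality of feasible regions for $\varepsilon\ge 0$, using the observation that $(u,v)\in E_*$ if and only if $\bar B_i-u+v\in\mathcal B_1$, is a slightly tidier packaging of the same argument.
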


\begin{proof}
Every core allocation is feasible for $(\mathrm{LP})'_2$ at $\varepsilon=0$, and summing the vertex constraints gives $|N\setminus N_*|\varepsilon\leq\gamma(N)$. Since $N\setminus N_*\ne\emptyset$, the program has a finite optimum. Let $(x,\varepsilon)$ be an optimal solution. Then $\varepsilon\geq0$, so its vertex and edge constraints imply $x\geq0$ and $x(u)\leq x(v)$ for every $(u,v)\in E[\bar{\mathcal P}_B]$. Moreover, $\bar N_0\subseteq N_*$ by Theorem~\ref{core 1.1}, so $x(v)=0$ for $v\in\bar N_0$. Together with $x(N)=\gamma(N)$, Theorem~\ref{core 2} yields $x\in\mathcal C(\Gamma_{\mathcal M})$. In particular, $x(\bar B_i)=1$ for every $i$ by Theorem~\ref{core 1.1}.

The vertex constraints in the two programs are the same, since $N_*=N_1$. We now compare the base constraints. Let $B\in\mathcal B\setminus\mathcal B_1$, and choose a perfect matching $\pi=\{(u_1,v_1),\ldots,(u_t,v_t)\}$ from $\bar B_1\setminus B$ to $B\setminus\bar B_1$. By~\eqref{eq:excess.base},
\[
x(B)=1+\sum_{k=1}^t[x(v_k)-x(u_k)].
\]
If every arc of $\pi$ belonged to $E_*$, this identity would give $x'(B)=1$ for every $x'\in\mathcal C(\Gamma_{\mathcal M})$, contrary to $B\notin\mathcal B_1$. Hence at least one arc lies outside $E_*$. Its difference is at least $\varepsilon$, and every other difference is nonnegative. Thus $x(B)\geq1+\varepsilon$, proving feasibility for $(\mathrm{LP})_2$.

Conversely, let $(x,\varepsilon)$ be feasible for $(\mathrm{LP})_2$. Then $x\in\mathcal C(\Gamma_{\mathcal M})$, so $x(N)=\gamma(N)$ and the zero coordinate and zero difference constraints of $(\mathrm{LP})'_2$ hold. Its vertex inequalities also hold since $N_*=N_1$. Let $(u,v)\in E[\bar{\mathcal P}_B]\setminus E_*$. There exists $x'\in\mathcal C(\Gamma_{\mathcal M})$ such that $x'(u)<x'(v)$. Choose $\bar B_i$ with $B'=\bar B_i-u+v\in\mathcal B$. By Theorem~\ref{core 1.1}, $x'(B')=1-x'(u)+x'(v)>1$, so $B'\notin\mathcal B_1$. Its base constraint gives $x(B')\geq1+\varepsilon$. Since $x(\bar B_i)=1$, we obtain $x(v)-x(u)=x(B')-x(\bar B_i)\geq\varepsilon$. Therefore, the two programs have the same optimal value and the same optimal allocations.
\end{proof}

We next determine the sets $N_{*}$ and $E_{*}$ with respect to the core. Recall the  partial order ``$\preceq$'' defined on the SCCs of $G[\bar{\mathcal{P}}_{B}]$. Consider the hierarchy partition
$
\mathcal{Q}^{+}_{1}\prec...\prec\mathcal{Q}^{+}_{t}
$
of  $(\bar{\mathcal{Q}}^{+},\preceq)$ introduced in Section 2.

For $i=1,\ldots,t$, set
$
n_i=|\bigcup_{Q\in\bar{\mathcal Q}_i^+}Q|
$
and $
q=\sum_{i=1}^{t}n_i,
$
which is the number of elements contained in $\bar{\mathcal Q}_{i}^{+}$ and  $\bar{\mathcal Q}^{+}$.

We shall use a core allocation  that is strictly increasing along the hierarchy layers. For each \(Q\in\bar{\mathcal Q}_j^+\), define

\begin{equation}\label{core constru}
\bar{x}_{Q}=\alpha+(j-1)\beta,\qquad j=1,\ldots,t,
\end{equation}
where, if $t\geq2$, we choose $0<\alpha<\rho^*/q$ and $\beta=(\rho^*-q\alpha)/\sum_{j=1}^{t}(j-1)n_j$. If $t=1$, take $\alpha=\rho^*/q$ and $\beta=0$.

Set $\bar x(v)=0$ for $v\in N\setminus\bigcup_{Q\in\bar{\mathcal Q}^+}Q$. Then $\bar x(N)=\gamma(N)$, and $\bar x$ satisfies the zero coordinate and edge inequalities in Theorem~\ref{core 2}. Thus $\bar x\in\mathcal C(\Gamma_{\mathcal M})$. Moreover, \(\bar{x}_{Q}>0\) for each \(Q\in\bar{\mathcal{Q}}^{+}\), and, for \(Q\prec Q'\) with \(Q\neq Q'\), we have
$
\bar x_Q<\bar x_{Q'}.
$

\begin{lemma}\label{NEchar}
If $\mathcal{C}(\Gamma_{\mathcal{M}})\neq\emptyset$, then
\begin{enumerate}
\item[(a)] $v\in N_{*}$ if and only if
$
v\in  N\setminus\bigcup_{Q\in\bar{\mathcal Q}^+}Q;
$
\item[(b)] For $(u,v)\in E[\bar{\mathcal P}_B]$, $(u,v)\in E_*$ if and only if $u,v\in N_*$ or $u,v$ belong to the same member of $\bar{\mathcal Q}^+$.
\end{enumerate}
\end{lemma}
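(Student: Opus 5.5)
The plan is to use the explicit core allocation $\bar x$ from \eqref{core constru} as a witness for the ``only if'' directions, and Corollary~\ref{core 2.1} for the ``if'' directions.

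For (a), suppose first that $v\notin\bigcup_{Q\in\bar{\mathcal Q}^+}Q$. Then $v$ lies in some SCC $Q\in\bar{\mathcal Q}\setminus\bar{\mathcal Q}^+$, so there is a directed $v$-$\bar N_0$ path. Corollary~\ref{core 2.1}(c) gives $x(v)=0$ for every $x\in\mathcal C(\Gamma_{\mathcal M})$, hence $v\in N_*$. Conversely, if $v\in Q$ for some $Q\in\bar{\mathcal Q}^+$, then $\bar x(v)=\bar x_Q\ge\alpha>0$, and $\bar x\in\mathcal C(\Gamma_{\mathcal M})$. This is a core allocation with nonzero value at $v$, so $v\notin N_*$.

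For (b), take $(u,v)\in E[\bar{\mathcal P}_B]$ and handle the ``if'' direction first.
\begin{itemize}
\item If $u,v\in N_*$, every core allocation vanishes at both endpoints, so the difference is $0$.
\item If $u,v$ lie in the same $Q\in\bar{\mathcal Q}^+$, Corollary~\ref{core 2.1}(b) gives $x(u)=x(v)$ for every core allocation.
\end{itemize}
In either case $(u,v)\in E_*$.

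For the ``only if'' direction, assume neither condition holds; I split into cases by part (a).
\begin{itemize}
\item If $u\in N_*$ and $v\notin N_*$, then $\bar x(v)>0=\bar x(u)$.
\item If $u\notin N_*$, then $u\in Q$ for some $Q\in\bar{\mathcal Q}^+$. The arc $(u,v)$ gives a path from $Q$ to the SCC $Q'$ containing $v$. If $Q'$ could reach $\bar N_0$, so could $Q$, contradicting $Q\in\bar{\mathcal Q}^+$. Hence $Q'\in\bar{\mathcal Q}^+$, so $v\notin N_*$. (This also shows that $u\notin N_*$, $v\in N_*$ cannot happen.)
\item In the remaining case both endpoints lie in $\bar{\mathcal Q}^+$, in distinct SCCs with $Q\preceq Q'$ and $Q\ne Q'$. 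The strict monotonicity of $\bar x$ along the order then gives $\bar x(u)=\bar x_Q<\bar x_{Q'}=\bar x(v)$.
\end{itemize}
In every case $\bar x$ has a nonzero difference on $(u,v)$, so $(u,v)\notin E_*$.

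The main obstacle is verifying the claimed strict monotonicity of $\bar x$: that $Q\preceq Q'$ with $Q\ne Q'$ forces $Q'$ to be in a strictly higher layer of the hierarchy partition. I would prove this by induction on the layer removal. When $Q$ is removed at step $j$, $Q'$ is not yet minimal among the remaining elements, because $Q$ (or something below it) precedes it. So $Q'$ is removed at a step $j'>j$.

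I would also check that $\alpha,\beta>0$ are well defined when $t\ge2$, using $\sum_j (j-1)n_j>0$. Since $\beta>0$ and $j'>j$, this gives $\bar x_{Q'}-\bar x_Q=(j'-j)\beta>0$. All other steps are direct applications of Corollary~\ref{core 2.1} and the definition of $\bar{\mathcal Q}^+$.
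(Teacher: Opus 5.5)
Your proposal is correct and matches the argument the paper intends. The paper gives no separate proof of this lemma, but it sets up exactly this reasoning: Corollary~\ref{core 2.1} shows that the listed vertices and arcs are fixed on the core. The layered core allocation $\bar x$ from \eqref{core constru} is positive on $\bigcup_{Q\in\bar{\mathcal Q}^+}Q$ and strictly increasing between distinct comparable members of $\bar{\mathcal Q}^+$, so it witnesses that every other vertex and arc is not fixed. Your checks that distinct comparable components fall into strictly later layers, and that $\beta>0$ when $t\ge2$, fill in the details the paper only asserts.
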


\begin{lemma}\label{nucleolus2}
The unique optimal allocation of $(\mathrm{LP}'_2)$ is given by
\begin{equation}\label{nucleolus chater}
x_Q=i\varepsilon_2^*
\quad(Q\in\bar{\mathcal Q}_i^+),\qquad
\varepsilon_2^*=\frac{\rho^*}{n_1+2n_2+\cdots+tn_t},
\end{equation}
and $x(v)=0$ for $v\in N_*$.
\end{lemma}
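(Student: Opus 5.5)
Using Lemma~\ref{NEchar}, I rewrite $(\mathrm{LP})'_2$ on the condensation of $G[\bar{\mathcal P}_B]$. The constraints $x(v)=0$ for $v\in N_*$ fix everything outside $\bigcup_{Q\in\bar{\mathcal Q}^+}Q$. The equality constraints on $E_*$ force $x$ to be constant on each $Q\in\bar{\mathcal Q}^+$; every SCC is strongly connected via arcs inside it, and all those arcs lie in $E_*$ by Lemma~\ref{NEchar}(b). So the variables reduce to $x_Q$ for $Q\in\bar{\mathcal Q}^+$. The remaining constraints become the following. First, $x_Q\ge\varepsilon$ for every $Q\in\bar{\mathcal Q}^+$. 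Second, $x_{Q'}-x_Q\ge\varepsilon$ whenever some arc goes from $Q$ to a different $Q'\in\bar{\mathcal Q}^+$. Arcs from $N_*$ into $\bar{\mathcal Q}^+$ give $x_{Q'}\ge\varepsilon$, which is already implied. There are no arcs from $\bar{\mathcal Q}^+$ into $N_*$ by the definition of $\bar{\mathcal Q}^+$. Finally, $\sum_Q |Q|x_Q=\rho^*$.

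\textbf{Upper bound.} I show by induction on $i$ that every feasible solution satisfies $x_Q\ge i\varepsilon$ for $Q\in\bar{\mathcal Q}_i^+$. The case $i=1$ is the vertex constraint. For $i>1$, property (ii) of the hierarchy partition gives $Q'\in\bar{\mathcal Q}_{i-1}^+$ with $Q'\preceq Q$. I need the step to use a single arc, so I take $Q$ as a minimal element after removing the first $i-1$ layers. Every $Q''$ with an arc into $Q$ is then in an earlier layer, and at least one such $Q''$ is in layer $i-1$: otherwise $Q$ would already have been minimal in step $i-1$. To justify this cleanly, I will argue that the layer index of $Q$ equals the length of the longest chain of direct arcs ending at $Q$ in the condensation restricted to $\bar{\mathcal Q}^+$. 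A predecessor of $Q$ in $\bar{\mathcal Q}^+$ via some path also reaches it by a path staying inside $\bar{\mathcal Q}^+$, because every vertex on a path that ends in $\bar{\mathcal Q}^+$ cannot reach $N_0$ and so lies in $\bar{\mathcal Q}^+$. Along a longest chain the arc constraints accumulate, giving $x_Q\ge i\varepsilon$. Summing then gives $\rho^*=\sum_i\sum_{Q\in\bar{\mathcal Q}_i^+}|Q|x_Q\ge\varepsilon\sum_i i n_i$, so $\varepsilon\le\varepsilon_2^*$.

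\textbf{Attainment and uniqueness.} Setting $x_Q=i\varepsilon_2^*$ on layer $i$ is feasible. For an arc from $Q$ to $Q'$ in $\bar{\mathcal Q}^+$, the layer of $Q'$ is strictly larger by the longest-chain characterization, so the difference is at least $\varepsilon_2^*$. The total is $\rho^*$ by the choice of $\varepsilon_2^*$. For uniqueness, take any optimal solution with $\varepsilon=\varepsilon_2^*$. Equality in the summed bound forces $|Q|(x_Q-i\varepsilon_2^*)=0$ for every $Q$, since each term is nonnegative. Hence $x_Q=i\varepsilon_2^*$.

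The main obstacle is the longest-chain characterization of the layers, i.e., that consecutive layers are linked by direct arcs inside $\bar{\mathcal Q}^+$. I would prove it by induction on the peeling procedure, using transitivity of reachability in the DAG.
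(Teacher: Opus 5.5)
Your proof is correct and follows essentially the same route as the paper. You reduce to one value per component of $\bar{\mathcal Q}^+$ via Lemma~\ref{NEchar}, prove $x_Q\ge i\varepsilon$ by induction over the layers, sum to get $\varepsilon\le\varepsilon_2^*$, and obtain attainment and uniqueness from the layered vector and equality in the sum. Your longest-chain description of the layers usefully justifies the paper's brief claims that some layer-$(i-1)$ component reaches $Q$ by a single arc and that arcs strictly raise the layer index. One small correction: a path between two members of $\bar{\mathcal Q}^+$ stays inside $\bigcup_{Q\in\bar{\mathcal Q}^+}Q$ because its vertices are reachable from the starting component, not because they lead into $\bar{\mathcal Q}^+$.
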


\begin{proof}
Let $(x,\varepsilon)$ be an optimal solution of $(\mathrm{LP}'_2)$. Every core allocation is feasible for this program at $\varepsilon=0$, so $\varepsilon\geq0$. By Theorem~\ref{core 2}, $x\in\mathcal C(\Gamma_{\mathcal M})$, and $x$ is constant on each SCC by Corollary~\ref{core 2.1}. For $Q\in\bar{\mathcal Q}_1^+$, the vertex constraints give $x_Q\geq\varepsilon$. For $Q\in\bar{\mathcal Q}_i^+$ with $i>1$, the hierarchy partition gives a predecessor $Q'\in\bar{\mathcal Q}_{i-1}^+$ joined to $Q$ by an exchange edge. By Lemma~\ref{NEchar}, this edge does not belong to $E_*$, so $x_Q\geq x_{Q'}+\varepsilon$. Induction gives $x_Q\geq i\varepsilon$. The total constraint therefore implies
\[
\rho^*=\gamma(N)=x(N)=\sum_{i=1}^t\sum_{Q\in\bar{\mathcal Q}_i^+}|Q|x_Q
\geq\varepsilon(n_1+2n_2+\cdots+tn_t).
\]
Thus $\varepsilon\leq\rho^*/(n_1+2n_2+\cdots+tn_t)$. At this upper bound, assign $x_Q=i\varepsilon$ for $Q\in\bar{\mathcal Q}_i^+$ and zero to $N_*$. The resulting vector has total $\gamma(N)$ and satisfies all vertex and edge constraints of $(\mathrm{LP}'_2)$, since every edge between distinct members of $\bar{\mathcal Q}^+$ goes from an earlier layer to a later layer. Hence the bound is attained. Equality in the displayed sum forces $x_Q=i\varepsilon$ on every block, proving uniqueness.
\end{proof}

Lemma~\ref{nucleolus2} explicitly gives the unique optimal solution of
\((\textnormal{LP})'_2\), which is the nucleolus \(\eta(\Gamma_{\mathcal M})\).
Since the hierarchy partition can be computed in polynomial time,  we have the follows result.
\begin{theorem}
Let $\Gamma_{\mathcal{M}}=(N,\gamma)$ be the matroid packing game defined on the matroid $\mathcal{M}=(N,\mathcal{I})$. When $\mathcal{C}(\Gamma_{\mathcal{M}})\neq\emptyset$, the nucleolus $\eta(\Gamma_{\mathcal{M}})$ can be computed in polynomial time.
\end{theorem}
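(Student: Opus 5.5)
The final theorem follows by assembling the preceding results into an explicit algorithm and checking that each step runs in polynomial time in the independence oracle model. The algorithm is:
\begin{enumerate}
\item[(1)] compute a maximum base packing $\bar{\mathcal P}_B$;
\item[(2)] build $G[\bar{\mathcal P}_B]$;
\item[(3)] compute $\bar N_0$, the SCCs $\bar{\mathcal Q}$, and $\bar{\mathcal Q}^+$;
\item[(4)] compute the hierarchy partition $\bar{\mathcal Q}_1^+\prec\cdots\prec\bar{\mathcal Q}_t^+$;
\item[(5)] output the vector of Lemma~\ref{nucleolus2}.
\end{enumerate}
By Theorem~\ref{core 3} we know $\bar{\mathcal Q}^+\neq\emptyset$ because the core is nonempty, so $t\ge1$ and the formula for $\varepsilon_2^*$ is well defined.

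Correctness goes as follows. Since $\mathcal C(\Gamma_{\mathcal M})\neq\emptyset$, we have $\varepsilon_1=0$ and $X_1=\mathcal C(\Gamma_{\mathcal M})$. By \cite{Gur} and Lemma~\ref{core 1}, $(\mathrm{LP})_2$ may be restricted to singleton and base constraints. Lemma~\ref{noncore nucleolus eq 1} says $(\mathrm{LP})_2$ has the same optimal allocations as $(\mathrm{LP})'_2$, where $N_*$ and $E_*$ are identified by Lemma~\ref{NEchar}. Lemma~\ref{nucleolus2} shows this program has a unique optimal allocation. Hence $X_2$ is a singleton, Maschler's scheme terminates at step $2$, and since the nucleolus lies in every $X_k$, it equals that allocation.

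For the running time:
\begin{itemize}
\item Step (1) is polynomial by \cite[Theorem~42.5]{Schr03}.
\item For step (2), each arc set $E_{\bar B_i}$ needs at most $|N|^2$ oracle calls, testing whether $\bar B_i-u+v$ is independent, and there are $\rho^*\le|N|$ bases.
\item Step (3) uses standard linear-time SCC and reachability algorithms on a graph with $|N|$ vertices.
\item Step (4) is Lemma~\ref{SSC2}.
\item Step (5) is a closed-form computation requiring only $n_1,\dots,n_t$.
\end{itemize}

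The only point needing care is that the equivalence of $(\mathrm{LP})_2$ and $(\mathrm{LP})'_2$ really delivers uniqueness of the optimal allocation of $(\mathrm{LP})_2$ itself. This holds because Lemma~\ref{noncore nucleolus eq 1} maps optimal solutions in both directions with the same $\varepsilon$. I expect no real obstacle beyond this bookkeeping.
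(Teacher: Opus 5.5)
Your proposal is correct and follows the same approach as the paper. The nucleolus is the unique optimal allocation of $(\mathrm{LP})'_2$, given in closed form by Lemma~\ref{nucleolus2} and transferred to $(\mathrm{LP})_2$ via Lemma~\ref{noncore nucleolus eq 1}, and polynomiality comes from computing a maximum base packing, its exchange graph, the SCCs and the hierarchy partition (Lemma~\ref{SSC2}). Your step-by-step running-time accounting spells out what the paper leaves to a one-line remark; it mirrors the proof the paper gives for Theorem~\ref{fractional-nucleolus-algorithm} in the empty core case.
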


\section{The least core and nucleolus when $\mathcal{C}(\Gamma_{\mathcal{M}}) = \emptyset$}

In this section, we discuss the least core and the nucleolus of the matroid packing game $\Gamma_{\mathcal{M}}$ with empty core. We first characterize the least core by the exchange graph of a fixed maximum fractional base packing. The SCCs of this graph determine a partition and a partial order. We then show that, as in the nonempty core case, the hierarchy partition determines the nucleolus through the second linear program in SLP$(\eta(\Gamma_{\mathcal M}))$.

\subsection{The least core}\label{sec:least-core}

We first consider the least core $\mathcal{LC}(\Gamma_{\mathcal M})$, which is the optimal solution set of the first linear program $(\mathrm{LP})_1$ in SLP$(\eta(\Gamma_{\mathcal M}))$:
\[
{\textnormal{(LP)}}_1:\quad
\begin{array}{rl}
\mbox{max} & \varepsilon\\[1mm]
\mbox{s.t.}
& \left\{
\begin{array}{ll}
x(S)\geq \gamma(S)+\varepsilon, & \quad \forall S\subseteq N,\\[1mm]
x(N)=\rho^*, & \\[1mm]
x(v)\geq 0, & \quad \forall v\in N.
\end{array}
\right.
\end{array}
\]
Let $\varepsilon_1^*$ be its optimal value, and write $X_1=\mathcal{LC}(\Gamma_{\mathcal M})$.

Recall the fractional packing program $(\mathrm{FP})$, its dual $(\mathrm{FD})$ and their common optimal value $\rho_f^*$ from Section~\ref{sec:packing-preliminaries}. Fix a maximum fractional base packing with positive weight bases $\widetilde{\mathcal P}_B=\{\widetilde B_1,\ldots,\widetilde B_s\}$ and weights $\lambda_1,\ldots,\lambda_s>0$. Then $\sum_{i=1}^s\lambda_i=\rho_f^*$. Let $G[\widetilde{\mathcal P}_B]=(N,E[\widetilde{\mathcal P}_B])$ be its base exchange graph, where $E[\widetilde{\mathcal P}_B]=\bigcup_{i=1}^sE_{\widetilde B_i}$. Define the load vector $c$ by $c(v)=\sum_{i:v\in\widetilde B_i}\lambda_i$ for $v\in N$, and let $\widetilde N_0=\{v\in N:c(v)<1\}$. By feasibility of $(\mathrm{FP})$, $0\leq c(v)\leq1$ for every $v\in N$. This packing, its graph and the associated notation remain fixed throughout this section.

Since an integer base packing is also a fractional base packing, $\rho_f^*\geq\rho^*$. If equality holds, an optimal solution of $(\mathrm{FD})$ is a core allocation. Consequently, the assumption $\mathcal C(\Gamma_{\mathcal M})=\emptyset$ implies $\rho_f^*>\rho^*$.

\begin{lemma}\label{fractional-load-decomposition}
The vector $c/\rho_f^*$ is a convex combination of the incidence vectors of bases of $\mathcal M$, and $(\rho^*/\rho_f^*)c$ is a convex combination of the incidence vectors of sets in $\mathcal B^{\rho^*}$, where $\mathcal B^{\rho^*}$ is the collection of unions of $\rho^*$ pairwise disjoint bases of $\mathcal M$.
\end{lemma}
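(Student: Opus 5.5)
The first claim is immediate from the definition of the load vector. Since $c=\sum_{i=1}^s\lambda_i\chi^{\widetilde B_i}$ and $\sum_i\lambda_i=\rho_f^*$, we get
\[
c/\rho_f^*=\sum_{i=1}^s(\lambda_i/\rho_f^*)\chi^{\widetilde B_i}.
\]
The coefficients are positive and sum to one, so this is the required convex combination of base incidence vectors.

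For the second claim, I would use polyhedral theory: the convex hull of $\mathcal B^{\rho^*}$ should be the base polytope of a suitable matroid. Let $\mathcal M^{\rho^*}$ be the union of $\rho^*$ copies of $\mathcal M$. By the matroid union theorem (Nash-Williams/Edmonds, \cite{Schr03}), its independent sets are the unions of $\rho^*$ independent sets of $\mathcal M$. Its rank function is
\[
r_{\rho^*}(S)=\min_{T\subseteq S}\bigl(|S\setminus T|+\rho^*r_{\mathcal M}(T)\bigr).
\]
Because $\mathcal M$ has $\rho^*$ disjoint bases, $r_{\rho^*}(N)=\rho^*r_{\mathcal M}(N)$. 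Hence the bases of $\mathcal M^{\rho^*}$ are precisely the independent sets of size $\rho^*r_{\mathcal M}(N)$. Such a set is a union of $\rho^*$ independent sets of $\mathcal M$ of total size $\rho^*r_{\mathcal M}(N)$. Each has size at most $r_{\mathcal M}(N)$ and they may be taken disjoint, so each must be a base. Thus the bases of $\mathcal M^{\rho^*}$ are exactly $\mathcal B^{\rho^*}$. Edmonds' base polytope theorem then reduces the claim to checking that $z=(\rho^*/\rho_f^*)c$ satisfies three conditions:
\[
z\ge0,\qquad z(N)=\rho^*r_{\mathcal M}(N),\qquad z(S)\le r_{\rho^*}(S)\ \text{for all } S\subseteq N.
\]

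The first two conditions are routine. Nonnegativity is clear. For the total, every base has $r_{\mathcal M}(N)$ elements, so $c(N)=\rho_f^*r_{\mathcal M}(N)$ and hence $z(N)=\rho^*r_{\mathcal M}(N)$. For the rank inequality, fix $S$ and $T\subseteq S$; it suffices to show $z(S)\le|S\setminus T|+\rho^*r_{\mathcal M}(T)$. I would split the sum:
\begin{itemize}
\item On $S\setminus T$, use $c\le1$ and $\rho^*/\rho_f^*\le1$ to get $z(S\setminus T)\le|S\setminus T|$.
\item On $T$, each base meets $T$ in an independent set, so $|\widetilde B_i\cap T|\le r_{\mathcal M}(T)$. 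This gives $c(T)\le\rho_f^*r_{\mathcal M}(T)$ and hence $z(T)\le\rho^*r_{\mathcal M}(T)$.
\end{itemize}
Adding the two bounds gives the inequality. So $z$ lies in the base polytope of $\mathcal M^{\rho^*}$, and is therefore a convex combination of incidence vectors of sets in $\mathcal B^{\rho^*}$.

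I expect the main obstacle to be the identification of $\mathcal B^{\rho^*}$ with the bases of the union matroid. This needs the existence of $\rho^*$ disjoint bases, which holds by definition of $\rho^*$, and the fact that independent sets of the union can be written as disjoint unions. The latter holds because a subset of an independent set is independent, so overlaps can be removed. Once this identification is in place, the polytope check is the short computation above. If one wants constructive coefficients, they follow from standard base polytope decomposition algorithms in the independence oracle model, but the lemma only needs existence.
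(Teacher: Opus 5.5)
Your proposal is correct and follows essentially the same route as the paper. Both arguments use the union matroid $\mathcal M^{\rho^*}$ with the matroid-union rank formula, and both check the base-polytope inequalities via $z(S)\le z(T)+|S\setminus T|\le \rho^* r_{\mathcal M}(T)+|S\setminus T|$. There are two minor differences: you verify these inequalities directly for $z=(\rho^*/\rho_f^*)c$, whereas the paper states them as a general observation that it reuses in later lemmas, and you spell out why the bases of $\mathcal M^{\rho^*}$ are exactly $\mathcal B^{\rho^*}$, which the paper only asserts.
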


\begin{proof}
For $S\subseteq N$, let $\chi^S$ denote its incidence vector. Since $\lambda_i/\rho_f^*>0$ and $\sum_{i=1}^s\lambda_i/\rho_f^*=1$, the identity $c/\rho_f^*=\sum_{i=1}^s(\lambda_i/\rho_f^*)\chi^{\widetilde B_i}$ proves the first assertion.

For the second assertion, consider the union matroid $\mathcal M^{\rho^*}=(N,\mathcal I^{\rho^*})$ of $\rho^*$ copies of $\mathcal M$, where $\mathcal I^{\rho^*}=\{I_1\cup\cdots\cup I_{\rho^*}:I_j\in\mathcal I,\ j=1,\ldots,\rho^*\}$. By the matroid union theorem~\cite[Corollary~42.1a]{Schr03}, this is a matroid with rank function
\[
r_{\mathcal M^{\rho^*}}(S)=\min_{T\subseteq S}\{|S\setminus T|+\rho^*r_{\mathcal M}(T)\}.
\]
Since $\mathcal M$ has $\rho^*$ pairwise disjoint bases, $r_{\mathcal M^{\rho^*}}(N)=\rho^*r_{\mathcal M}(N)$, and the bases of $\mathcal M^{\rho^*}$ are precisely the members of $\mathcal B^{\rho^*}$.

We use the following observation. If $z\in[0,1]^N$ and $z/\rho^*$ is a convex combination of incidence vectors of bases of $\mathcal M$, then $z(N)=\rho^*r_{\mathcal M}(N)$ and, for $T\subseteq S\subseteq N$, $z(S)\leq z(T)+|S\setminus T|\leq\rho^*r_{\mathcal M}(T)+|S\setminus T|$. Taking the minimum over $T$ gives $z(S)\leq r_{\mathcal M^{\rho^*}}(S)$. The rank description of the base polytope~\cite[Corollary~40.2d]{Schr03} therefore shows that $z$ is a convex combination of the incidence vectors of sets in $\mathcal B^{\rho^*}$.

Apply this observation to $z=(\rho^*/\rho_f^*)c$. The capacity constraints give $z\in[0,1]^N$, and the first assertion gives the required convex representation of $z/\rho^*$. This proves the second assertion.
\end{proof}

The least core can be determined from the base constraints, with the excess parameter scaled as follows.
\begin{lemma}\label{fractional-base-reduction}
The linear program
\[
(\mathrm{LP})_B:\quad
\begin{array}{rl}
\mbox{max}&\varepsilon\\[1mm]
\mbox{s.t.}&\left\{
\begin{array}{ll}
x(B)\geq1+\varepsilon/\rho^*,&\quad B\in\mathcal B,\\[1mm]
x(N)=\rho^*,&\\[1mm]
x(v)\geq0,&\quad v\in N
\end{array}\right.
\end{array}
\]
has the same optimal allocations as $(\mathrm{LP})_1$, and their common optimal value is $\varepsilon_1^*=(\rho^*)^2/\rho_f^*-\rho^*$. Moreover, $x\in\mathcal{LC}(\Gamma_{\mathcal M})$ if and only if $(\rho_f^*/\rho^*)x$ is an optimal solution of $(\mathrm{FD})$.
\end{lemma}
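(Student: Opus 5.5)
We relate $(\mathrm{LP})_1$, $(\mathrm{LP})_B$ and $(\mathrm{FD})$ directly. The strategy is to show that the binding constraints of $(\mathrm{LP})_1$ are exactly the coalitions in $\mathcal B^{\rho^*}$ (unions of $\rho^*$ disjoint bases). Each such constraint reduces to a base constraint after the rescaling $y=(\rho_f^*/\rho^*)x$.

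\emph{Step 1: an upper bound for $(\mathrm{LP})_1$.} Let $(x,\varepsilon)$ be feasible for $(\mathrm{LP})_1$. Take the coalitions $S\in\mathcal B^{\rho^*}$. They have $\gamma(S)=\rho^*$, so feasibility gives $x(S)\ge\rho^*+\varepsilon$. By Lemma~\ref{fractional-load-decomposition}, $(\rho^*/\rho_f^*)c$ is a convex combination of incidence vectors of such sets. Averaging therefore gives $(\rho^*/\rho_f^*)\,x\cdot c\ge \rho^*+\varepsilon$. Since $c\le 1$ and $x\ge0$, we have $x\cdot c\le x(N)=\rho^*$. Hence $\varepsilon\le (\rho^*)^2/\rho_f^*-\rho^*$.

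\emph{Step 2: attaining the bound.} Let $y^*$ be optimal for $(\mathrm{FD})$, and set $x=(\rho^*/\rho_f^*)y^*$. Then $x\ge0$ and $x(N)=\rho^*$. For any $S$ with $\gamma(S)=k$, write $S\supseteq B_1\cup\dots\cup B_k$ with the $B_j$ disjoint. Then
\[
x(S)\ge k\rho^*/\rho_f^*=k+k(\rho^*/\rho_f^*-1)\ge k+\rho^*(\rho^*/\rho_f^*-1).
\]
The last step uses $k\le\rho^*$ and $\rho^*/\rho_f^*-1\le0$. So $x$ satisfies every constraint with $\varepsilon=\varepsilon_1^*:=(\rho^*)^2/\rho_f^*-\rho^*$, and this value is the optimum of $(\mathrm{LP})_1$.

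\emph{Step 3: characterizing optimal allocations.} Let $x$ be optimal for $(\mathrm{LP})_1$. Applying the constraint to the coalition $S=B$, a single base, gives $x(B)\ge 1+\varepsilon_1^*$. Setting $y=(\rho_f^*/\rho^*)x$, we get $y(B)\ge (\rho_f^*/\rho^*)(\rho^*/\rho_f^*)=1$. Thus $y$ is feasible for $(\mathrm{FD})$, and $y(N)=\rho_f^*$, so $y$ is optimal. Conversely, Step~2 shows that every scaled optimal dual solution is feasible for $(\mathrm{LP})_1$ at level $\varepsilon_1^*$, hence lies in $\mathcal{LC}$. This proves the ``moreover'' claim.

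\emph{Step 4: the program $(\mathrm{LP})_B$.} Its constraints say that $y=(\rho_f^*/\rho^*)x$ satisfies $y(B)\ge(\rho_f^*/\rho^*)(1+\varepsilon/\rho^*)$ and $y(N)=\rho_f^*$. By LP duality for $(\mathrm{FP})/(\mathrm{FD})$, the largest feasible $\varepsilon$ is the one making this right-hand side equal to $1$. That value is exactly $\varepsilon_1^*$, since $1+\varepsilon_1^*/\rho^*=\rho^*/\rho_f^*$. The optimal $x$ are again the scaled optimal dual solutions, which coincide with $\mathcal{LC}$ by Step~3.

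\emph{Main obstacle.} The delicate step is Step~1: bounding $\varepsilon$ from above over \emph{all} coalitions rather than just bases. The key tool is the second decomposition in Lemma~\ref{fractional-load-decomposition}, which shows that the load vector $c$, suitably scaled, averages the $\mathcal B^{\rho^*}$ constraints. The inequality $x\cdot c\le x(N)$ then closes the argument.
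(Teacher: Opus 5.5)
\textbf{There is a genuine gap in Step 3.} Steps 1, 2 and 4 are sound. Together they give three facts: the common optimal value $\varepsilon_1^*$; that the optimal allocations of $(\mathrm{LP})_B$ are exactly the rescaled optimal solutions of $(\mathrm{FD})$; and that these all lie in $\mathcal{LC}(\Gamma_{\mathcal M})$.

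What remains is the reverse inclusion: every least core allocation, rescaled, is feasible for $(\mathrm{FD})$. This is the genuinely delicate direction, not Step 1, and Step 3 does not prove it. The $(\mathrm{LP})_1$ constraint for a single base reads $x(B)\ge\gamma(B)+\varepsilon_1^*=1+\varepsilon_1^*$. The value you need is $\rho^*/\rho_f^*=1+\varepsilon_1^*/\rho^*$, as you yourself write in Step 4. Since $\varepsilon_1^*<0$, we have $1+\varepsilon_1^*<1+\varepsilon_1^*/\rho^*$ whenever $\rho^*\ge 2$. So you only get $y(B)\ge(\rho_f^*/\rho^*)(1+\varepsilon_1^*)<1$, and $y$ need not be feasible for $(\mathrm{FD})$.

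Single-base constraints genuinely cannot close this. Take $U_{2,5}$, where $\rho^*=2$, $\rho_f^*=5/2$ and $\varepsilon_1^*=-2/5$. The vector $x=(0.3,0.3,0.3,0.55,0.55)$ has $x(N)=2$ and $x(B)\ge 3/5$ for every base. Yet $x(\{1,2\})=0.6<4/5$. It is excluded from the least core only by constraints on unions of two disjoint bases, e.g.\ $x(\{1,2,3,4\})=1.45<8/5$.

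\textbf{The missing idea} is to extract the single-base bound from the $\mathcal B^{\rho^*}$ constraints, using tightness.
\begin{itemize}
\item For $x\in\mathcal{LC}(\Gamma_{\mathcal M})$, the chain in your Step 1 holds with equality, so $(\rho^*/\rho_f^*)\,c\cdot x=\rho^*+\varepsilon_1^*$.
\item Fix $B\in\mathcal B$. Because $\rho_f^*>\rho^*$, we have $(\rho^*/\rho_f^*)c<1$ componentwise. Hence for small $\delta>0$ the vector $z_\delta=(1-\delta)(\rho^*/\rho_f^*)c+\delta\rho^*\chi^B$ lies in $[0,1]^N$.
\item Also $z_\delta/\rho^*=(1-\delta)c/\rho_f^*+\delta\chi^B$ is a convex combination of base incidence vectors. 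The union-matroid rank argument in the proof of Lemma~\ref{fractional-load-decomposition} then writes $z_\delta$ as a convex combination of incidence vectors of sets in $\mathcal B^{\rho^*}$.
\item Averaging $x(K)\ge\rho^*+\varepsilon_1^*$ over this decomposition gives $\rho^*+\varepsilon_1^*\le(1-\delta)(\rho^*+\varepsilon_1^*)+\delta\rho^*x(B)$. That is, $x(B)\ge 1+\varepsilon_1^*/\rho^*=\rho^*/\rho_f^*$, which is exactly the $(\mathrm{FD})$ feasibility you need.
\end{itemize}
This perturbation is how the paper handles the reverse inclusion. Without it, or an equivalent argument, your proof establishes neither the equality of the optimal allocation sets nor the ``only if'' part of the final claim.
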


\begin{proof}
Every feasible solution of $(\mathrm{LP})_B$ is feasible for $(\mathrm{LP})_1$. Indeed, summing its base constraints over a maximum base packing gives $\rho^*+\varepsilon\leq x(N)=\rho^*$, so $\varepsilon\leq0$. For $S\subseteq N$ with $\gamma(S)=k$, summing over $k$ pairwise disjoint bases in $S$ yields $x(S)-\gamma(S)\geq k\varepsilon/\rho^*\geq\varepsilon$.

Let $x\in X_1$. For every $K\in\mathcal B^{\rho^*}$, we have $x(K)\geq\rho^*+\varepsilon_1^*$. Taking the convex combination supplied by Lemma~\ref{fractional-load-decomposition}, and using $c(v)\leq1$, gives
\begin{equation}\label{eq:fractional-lc-bound}
\rho^*+\varepsilon_1^*
\leq\frac{\rho^*}{\rho_f^*}\sum_{v\in N}c(v)x(v)
\leq\frac{(\rho^*)^2}{\rho_f^*}.
\end{equation}
Conversely, if $y$ is an optimal solution of $(\mathrm{FD})$, then $x=(\rho^*/\rho_f^*)y$ is feasible for $(\mathrm{LP})_B$ at $\varepsilon=(\rho^*)^2/\rho_f^*-\rho^*$. Thus both programs attain this value, and every optimal allocation of $(\mathrm{LP})_B$ belongs to $X_1$.

To prove the reverse inclusion, let $x\in X_1$ and fix $B\in\mathcal B$. Both inequalities in~\eqref{eq:fractional-lc-bound} are equalities. Since $(\rho^*/\rho_f^*)c(v)<1$ for every $v$, for sufficiently small $\delta>0$ the vector $(1-\delta)(\rho^*/\rho_f^*)c+\delta\rho^*\chi^B$ belongs to $[0,1]^N$. Dividing it by $\rho^*$ gives a convex combination of incidence vectors of bases of $\mathcal M$. The rank argument in the proof of Lemma~\ref{fractional-load-decomposition} therefore gives a convex decomposition into incidence vectors of sets in $\mathcal B^{\rho^*}$. Taking their weights with respect to $x$ yields
\[
\rho^*+\varepsilon_1^*
\leq(1-\delta)(\rho^*+\varepsilon_1^*)+\delta\rho^*x(B).
\]
Hence $x(B)\geq1+\varepsilon_1^*/\rho^*$, proving equality of the optimal allocation sets.

At the common optimum, the base constraints are $x(B)\geq\rho^*/\rho_f^*$. After scaling by $\rho_f^*/\rho^*$, these are exactly the constraints of $(\mathrm{FD})$, and the total weight is $\rho_f^*$. This proves the final assertion.
\end{proof}

We now characterize the least core by the fixed exchange graph $G[\widetilde{\mathcal P}_B]$.

\begin{theorem}\label{LC-characterization}
Let $x\in X(\Gamma_{\mathcal M})$. Then $x\in\mathcal{LC}(\Gamma_{\mathcal M})$ if and only if
\begin{enumerate}
\item[(a)] $x(v)=0$ for $v\in\widetilde N_0$;
\item[(b)] $x(u)\leq x(v)$ for $(u,v)\in E[\widetilde{\mathcal P}_B]$.
\end{enumerate}
Furthermore, every $x\in\mathcal{LC}(\Gamma_{\mathcal M})$ satisfies
\begin{equation}\label{eq:fractional-minimum-bases}
x(\widetilde B_i)=\frac{\rho^*}{\rho_f^*}=1+\frac{\varepsilon_1^*}{\rho^*}
\quad(i=1,\ldots,s).
\end{equation}
\end{theorem}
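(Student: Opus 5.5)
The plan is to work through Lemma~\ref{fractional-base-reduction}: $x\in\mathcal{LC}(\Gamma_{\mathcal M})$ if and only if $y=(\rho_f^*/\rho^*)x$ is optimal for $(\mathrm{FD})$. I will use complementary slackness between $(\mathrm{FP})$ and $(\mathrm{FD})$ for the fixed maximum fractional packing $\widetilde{\mathcal P}_B$, mirroring the proof of Theorem~\ref{core 2}.

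\emph{Necessity.} Let $x\in\mathcal{LC}(\Gamma_{\mathcal M})$ and let $y$ be the scaled vector above, which is optimal for $(\mathrm{FD})$. The packing $\lambda$ is optimal for $(\mathrm{FP})$, so complementary slackness applies.
\begin{itemize}
\item Since $\lambda_i>0$, we get $y(\widetilde B_i)=1$ for each $i$. Hence $x(\widetilde B_i)=\rho^*/\rho_f^*$, which equals $1+\varepsilon_1^*/\rho^*$ because $\varepsilon_1^*=(\rho^*)^2/\rho_f^*-\rho^*$. This proves \eqref{eq:fractional-minimum-bases}.
\item Since $c(v)<1$ on $\widetilde N_0$, we get $y(v)=0$ there, which is (a).
\item For an arc $(u,v)\in E_{\widetilde B_i}$, the set $B'=\widetilde B_i-u+v$ is a base. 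Then $y(v)-y(u)=y(B')-y(\widetilde B_i)\ge 1-1=0$, which is (b).
\end{itemize}

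\emph{Sufficiency.} Let $x\in X(\Gamma_{\mathcal M})$ satisfy (a) and (b), and set $y=(\rho_f^*/\rho^*)x$.

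First, every $\widetilde B_i$ is a minimum $x$-weight base. For any $B\in\mathcal B$, Lemma~\ref{matroid exchange 1} gives a perfect matching in $E_{\widetilde B_i}$ from $\widetilde B_i\setminus B$ to $B\setminus\widetilde B_i$. Summing (b) over this matching gives $x(B)\ge x(\widetilde B_i)$. So all $\widetilde B_i$ share the common value $m=\min_{B\in\mathcal B}x(B)$.

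Next, compute $\sum_v c(v)x(v)$ in two ways. Expanding $c$ gives $\sum_v c(v)x(v)=\sum_i\lambda_i x(\widetilde B_i)=\rho_f^* m$. On the other hand, $x$ vanishes on $\widetilde N_0$ by (a), and $c=1$ elsewhere, so $\sum_v c(v)x(v)=x(N)=\rho^*$. Hence $m=\rho^*/\rho_f^*$.

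It follows that $y(B)\ge 1$ for every base $B$, and $y(N)=\rho_f^*$. So $y$ is feasible for $(\mathrm{FD})$ and attains its optimal value, hence is optimal. By the last assertion of Lemma~\ref{fractional-base-reduction}, $x\in\mathcal{LC}(\Gamma_{\mathcal M})$.

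The main subtlety is that condition (a) only controls $\widetilde N_0$, not every element outside $\bigcup_i\widetilde B_i$. The load identity $\sum_v c(v)x(v)=x(N)$ is what lets us avoid needing more. It holds precisely because $c\equiv1$ off $\widetilde N_0$, and this uses only (a), not the support of the packing.
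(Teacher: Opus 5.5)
Your proof is correct and follows essentially the same route as the paper. Necessity uses Lemma~\ref{fractional-base-reduction} plus complementary slackness with the fixed packing to get (a), (b) and \eqref{eq:fractional-minimum-bases}, and sufficiency uses the exchange-matching argument, the load identity $x(N)=\sum_v c(v)x(v)=\rho_f^*\,x(\widetilde B_1)$, and optimality of the rescaled vector for $(\mathrm{FD})$.

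One small quibble with your closing remark: every element outside $\bigcup_i\widetilde B_i$ has load $0$, so it already lies in $\widetilde N_0$. The real difference from the integer case is that $\widetilde N_0$ may also contain covered elements of fractional load, and your load identity handles those correctly.
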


\begin{proof}
Let $x\in\mathcal{LC}(\Gamma_{\mathcal M})$. By Lemma~\ref{fractional-base-reduction}, $y=(\rho_f^*/\rho^*)x$ is an optimal solution of $(\mathrm{FD})$. Complementary slackness with the fixed optimal fractional packing gives $y(v)(1-c(v))=0$ for $v\in N$ and $y(\widetilde B_i)=1$ for $i=1,\ldots,s$,
where the second equality uses $\lambda_i>0$. Thus $x(v)=0$ for $v\in\widetilde N_0$, proving (a). For $(u,v)\in E[\widetilde{\mathcal P}_B]$, choose $i$ such that $\widetilde B_i-u+v\in\mathcal B$. Dual feasibility gives $y(\widetilde B_i-u+v)\geq1=y(\widetilde B_i)$,
and hence $x(u)\leq x(v)$, proving (b). The equalities in~\eqref{eq:fractional-minimum-bases} follow from complementary slackness and Lemma~\ref{fractional-base-reduction}.

Conversely, let $x\in X(\Gamma_{\mathcal M})$ satisfy (a) and (b). For every $B\in\mathcal B$, Lemma~\ref{matroid exchange 1} gives a perfect matching from $\widetilde B_i\setminus B$ to $B\setminus\widetilde B_i$ in $E_{\widetilde B_i}$. Summing (b) over the matching shows that $x(B)\geq x(\widetilde B_i)$. Hence all $\widetilde B_i$ are minimum weight bases and have the same weight. By (a),
\[
\rho^*=x(N)=\sum_{v\in N}c(v)x(v)
=\sum_{i=1}^s\lambda_i x(\widetilde B_i)
=\rho_f^*x(\widetilde B_1).
\]
Therefore $x(\widetilde B_i)=\rho^*/\rho_f^*$ and $x(B)\geq\rho^*/\rho_f^*$ for every $B\in\mathcal B$. It follows that $(\rho_f^*/\rho^*)x$ is feasible for $(\mathrm{FD})$ with objective value $\rho_f^*$, and thus is optimal. Lemma~\ref{fractional-base-reduction} gives $x\in\mathcal{LC}(\Gamma_{\mathcal M})$.
\end{proof}

Recall from Section~\ref{sec:packing-exchange-graphs} that $\widetilde{\mathcal Q}$ is the collection of all SCCs of $G[\widetilde{\mathcal P}_B]$, and $\mathcal Q^c$ consists of the members from which there is no directed path to $\widetilde N_0$. For $Q,Q'\in\widetilde{\mathcal Q}$, write $Q\preceq Q'$ if and only if there is a directed path from $Q$ to $Q'$ in $G[\widetilde{\mathcal P}_B]$. Paths of length zero are allowed. Thus $\preceq$ is a partial order on $\widetilde{\mathcal Q}$, and we use the same notation for its restriction to $\mathcal Q^c$.

\begin{corollary}\label{fractional-canonical-partition}
Given the maximum fractional base packing $\widetilde{\mathcal P}_B$ and its exchange graph $G[\widetilde{\mathcal P}_B]$, if $x\in\mathcal{LC}(\Gamma_{\mathcal M})$, then
\begin{enumerate}
\item[(a)] If $u\in Q$, $v\in Q'$ and $Q\preceq Q'$, then $x(u)\leq x(v)$;
\item[(b)] If $u,v$ belong to the same SCC $Q\in\widetilde{\mathcal Q}$, then $x(u)=x(v)$. Henceforth we write $x_Q$ to denote the common value of the vertices in $Q$, i.e., $x_Q=\frac{x(Q)}{|Q|}$;
\item[(c)] If there is a directed path from $v$ to $\widetilde N_0$ in $G[\widetilde{\mathcal P}_B]$, then $x(v)=0$. Equivalently, $x(v)=0$ for every $v\in Q$ with $Q\in\widetilde{\mathcal Q}\setminus\mathcal Q^c$.
\end{enumerate}
Moreover, $\mathcal Q^c\ne\emptyset$.
\end{corollary}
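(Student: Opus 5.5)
Parts (a)–(c) follow from Theorem~\ref{LC-characterization} much as Corollary~\ref{core 2.1} followed from Theorem~\ref{core 2}; the only genuinely new claim is $\mathcal Q^c\ne\emptyset$.

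For (a), let $u\in Q$, $v\in Q'$ with $Q\preceq Q'$. Since $Q$ and $Q'$ are strongly connected and there is a $Q$-$Q'$ path, there is a directed path $u=w_0,w_1,\ldots,w_k=v$ in $G[\widetilde{\mathcal P}_B]$. Applying Theorem~\ref{LC-characterization}(b) to each arc gives $x(w_{j-1})\le x(w_j)$, and chaining these yields $x(u)\le x(v)$. Part (b) follows by applying (a) in both directions, since $Q\preceq Q$ via paths inside the SCC. For (c), suppose there is a path from $v$ to some $w\in\widetilde N_0$. The same chaining gives $x(v)\le x(w)$. Theorem~\ref{LC-characterization}(a) gives $x(w)=0$, and nonnegativity of allocations then forces $x(v)=0$. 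The equivalent reformulation is immediate, because reachability of $\widetilde N_0$ is the same for all vertices of one SCC.

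To show $\mathcal Q^c\ne\emptyset$, I first note that $\mathcal{LC}(\Gamma_{\mathcal M})$ is nonempty. By Lemma~\ref{fractional-base-reduction}, any optimal solution $y$ of $(\mathrm{FD})$, which exists, yields the allocation $x=(\rho^*/\rho_f^*)y\in\mathcal{LC}(\Gamma_{\mathcal M})$. Now suppose $\mathcal Q^c=\emptyset$. Then every $v\in N$ has a directed path to $\widetilde N_0$, so by (c) we get $x(v)=0$ for all $v$. This gives $x(N)=0\ne\rho^*$, since $\rho^*\ge1$. That is a contradiction, so $\mathcal Q^c\ne\emptyset$.

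The one point to check is $\rho^*\ge 1$. This holds because $r_{\mathcal M}(N)>0$ guarantees that a base exists, and a single base is a packing. So the argument is essentially routine, and the main care needed is to invoke the existence of a least core element through Lemma~\ref{fractional-base-reduction} rather than assume it.
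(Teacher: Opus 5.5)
Your proposal is correct and follows the paper's proof essentially step for step. Parts (a)--(c) come from chaining Theorem~\ref{LC-characterization}(b) along directed paths, with (a) of that theorem and nonnegativity giving (c). The claim $\mathcal Q^c\ne\emptyset$ comes from the nonemptiness of the least core, obtained via Lemma~\ref{fractional-base-reduction} and an optimal solution of $(\mathrm{FD})$, together with $x(N)=\rho^*>0$.
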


\begin{proof}
Part (a) follows by applying Theorem~\ref{LC-characterization}(b) along a directed path, and (b) follows from mutual reachability. For (c), if $w\in\widetilde N_0$ is reachable from $v$, then Theorem~\ref{LC-characterization} gives $0\leq x(v)\leq x(w)=0$. Finally, the least core is nonempty by Lemma~\ref{fractional-base-reduction} and the existence of an optimal solution of $(\mathrm{FD})$. If $\mathcal Q^c=\emptyset$, part (c) would force every least core allocation to be zero on $N$, contradicting $x(N)=\rho^*>0$.
\end{proof}

\subsection{The nucleolus}

We next compute the nucleolus.

As in Section~4, define
$\widetilde N_*=\{v\in N:x(v)=0,\ \forall x\in X_1\}$ and
$\widetilde E_*=\{(u,v)\in E[\widetilde{\mathcal P}_B]:x(v)-x(u)=0,\ \forall x\in X_1\}$.
We next determine these two sets with respect to the least core. Consider the hierarchy partition $\mathcal Q_1^c\prec\cdots\prec\mathcal Q_t^c$ of $(\mathcal Q^c,\preceq)$ introduced in Section~2. For $i=1,\ldots,t$, set $n_i=\sum_{Q\in\mathcal Q_i^c}|Q|$ and $q=\sum_{i=1}^t n_i$.

As in~\eqref{core constru}, for each $Q\in\mathcal Q_j^c$, define $\widetilde x_Q=\alpha+(j-1)\beta$, where, if $t\geq2$, we choose $0<\alpha<\rho^*/q$ and $\beta=(\rho^*-q\alpha)/\sum_{j=1}^t(j-1)n_j$. If $t=1$, take $\alpha=\rho^*/q$ and $\beta=0$. Set $\widetilde x(v)=0$ for $v\in N\setminus\bigcup_{Q\in\mathcal Q^c}Q$. No edge goes from $\bigcup_{Q\in\mathcal Q^c}Q$ to its complement. Thus $\widetilde x(N)=\rho^*$, and $\widetilde x$ satisfies the zero coordinate and edge inequalities in Theorem~\ref{LC-characterization}. Hence $\widetilde x\in X_1$. Moreover, $\widetilde x_Q>0$ for every $Q\in\mathcal Q^c$, and $\widetilde x_Q<\widetilde x_{Q'}$ whenever $Q\prec Q'$ and $Q\ne Q'$. If $|\mathcal Q^c|\geq2$, then, for any $Q\in\mathcal Q^c$, decreasing $\widetilde x_Q$ by a sufficiently small $\delta>0$ and distributing the released total $|Q|\delta$ equally among the other SSC of $\mathcal Q^c$, uniformly within each SSC, yields another least core allocation.

\begin{lemma}\label{fractional-unfixed-local}
If $\mathcal{C}(\Gamma_{\mathcal M})=\emptyset$, then
\begin{enumerate}
\item[(a)] $v\in\widetilde N_*$ if and only if
$v\in N\setminus\bigcup_{Q\in\mathcal Q^c}Q$;
\item[(b)] For $(u,v)\in E[\widetilde{\mathcal P}_B]$, $(u,v)\in\widetilde E_*$ if and only if $u,v\in\widetilde N_*$ or $u,v$ belong to the same member of $\mathcal Q^c$.
\end{enumerate}
Moreover, if $|\mathcal Q^c|\geq2$, then $x(v)$ is not constant on $X_1$ for every $v\in N\setminus\widetilde N_*$, and $x(v)-x(u)$ is not constant on $X_1$ for every $(u,v)\in E[\widetilde{\mathcal P}_B]\setminus\widetilde E_*$.
\end{lemma}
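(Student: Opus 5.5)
Both directions of (a) and (b) come from Corollary~\ref{fractional-canonical-partition} together with the explicit least core allocation $\widetilde x$ and its perturbations built just before the statement. By Theorem~\ref{LC-characterization}, membership in $X_1$ means: $x\in X(\Gamma_{\mathcal M})$, $x$ vanishes on $\widetilde N_0$, and $x$ is monotone along the arcs of $G[\widetilde{\mathcal P}_B]$. So every check below is a verification of these local conditions.

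For (a), suppose first that $v\notin\bigcup_{Q\in\mathcal Q^c}Q$. Then the SCC of $v$ lies in $\widetilde{\mathcal Q}\setminus\mathcal Q^c$, so $v$ reaches $\widetilde N_0$. Corollary~\ref{fractional-canonical-partition}(c) gives $x(v)=0$ for all $x\in X_1$, hence $v\in\widetilde N_*$. Conversely, if $v\in Q\in\mathcal Q^c$, then $\widetilde x_Q>0$, so $v\notin\widetilde N_*$.

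For (b), the ``if'' direction is direct. If $u,v\in\widetilde N_*$, both coordinates vanish on $X_1$. If $u,v$ lie in the same member of $\mathcal Q^c$, Corollary~\ref{fractional-canonical-partition}(b) makes their difference zero. For ``only if'', take an arc $(u,v)$ not of either type and split into cases using (a).
\begin{itemize}
\item[(i)] $u\notin\widetilde N_*$ and $v\in\widetilde N_*$. This is impossible: $v$ reaches $\widetilde N_0$, so $u$ does too, forcing $u\in\widetilde N_*$.
\item[(ii)] $u\in\widetilde N_*$ and $v\in Q\in\mathcal Q^c$. Then $\widetilde x(v)-\widetilde x(u)=\widetilde x_Q>0$.
\item[(iii)] $u\in Q$ and $v\in Q'$ with $Q\neq Q'$, both in $\mathcal Q^c$. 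The arc gives $Q\prec Q'$, so $\widetilde x_{Q'}>\widetilde x_Q$.
\end{itemize}
In cases (ii) and (iii) the difference is nonzero at $\widetilde x\in X_1$, so $(u,v)\notin\widetilde E_*$. This proves (b).

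For the \emph{moreover} part, assume $|\mathcal Q^c|\ge 2$. We need a second least core allocation that differs from $\widetilde x$ in the relevant coordinate or difference. We use the perturbation described before the statement: pick $Q_0\in\mathcal Q^c$, lower $\widetilde x_{Q_0}$ by a small $\delta>0$, and spread the released amount $|Q_0|\delta$ equally over the other members of $\mathcal Q^c$, uniformly within each. Call the result $x'$.

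The key check is that $x'\in X_1$. Zeros outside $\bigcup\mathcal Q^c$ are untouched, and the total is unchanged. Arcs inside an SCC keep difference $0$. Arcs between distinct members of $\mathcal Q^c$ had strictly positive gap under $\widetilde x$, so a small $\delta$ keeps them nonnegative. Arcs from $N\setminus\bigcup\mathcal Q^c$ into $\mathcal Q^c$ keep $x'\ge 0$. No arcs leave $\bigcup\mathcal Q^c$.

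For a vertex $v\in Q_0$, taking this $Q_0$ gives $x'(v)\ne\widetilde x(v)$, so $x(v)$ is not constant on $X_1$. For an arc $(u,v)\notin\widetilde E_*$, the case analysis above says $v$ lies in some $Q'\in\mathcal Q^c$ and $u$ is either in $\widetilde N_*$ or in a different member $Q$ of $\mathcal Q^c$. Choose $Q_0=Q'$. Then $x'(v)$ drops by $\delta$, while $x'(u)$ either stays at $0$ or rises. Hence the difference changes, and $x(v)-x(u)$ is not constant on $X_1$.

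The main subtlety is the case $|\mathcal Q^c|=1$. There the moreover claim is not asserted, and (a) and (b) still follow from $\widetilde x$ alone, so no perturbation is needed. The only care required is to keep $\delta$ smaller than the minimal positive gap of $\widetilde x$ across arcs, which is what guarantees that $x'$ remains in $X_1$.
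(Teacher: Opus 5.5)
Your proof is correct and is essentially the argument the paper intends. The paper gives no separate proof; it relies on two things described just before the lemma: the allocation $\widetilde x$, which is positive on every member of $\mathcal Q^c$ and strictly increasing across strict comparabilities, and the small perturbation that lowers one member of $\mathcal Q^c$ and raises the others. You supply the verifications the paper leaves implicit, namely the exhaustive case analysis for arcs outside $\widetilde E_*$ and the choice of $Q_0$ as the component of the head $v$, which makes $x(v)-x(u)$ strictly decrease.
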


If $|\mathcal Q^c|=1$, Theorem~\ref{LC-characterization} and Corollary~\ref{fractional-canonical-partition} imply that the least core consists of the single allocation
\[
\hat x(v)=
\begin{cases}
\dfrac{\rho^*}{|N\setminus\widetilde N_*|},&v\in N\setminus\widetilde N_*,\\[2mm]
0,&v\in\widetilde N_*.
\end{cases}
\]
This allocation is the nucleolus. In the remainder of this section, assume that $|\mathcal Q^c|\geq2$.

The second program in SLP$(\eta(\Gamma_{\mathcal M}))$ is
\[
(\textnormal{LP}_2):\quad
\begin{array}{rl}
\mbox{max}&\varepsilon\\[1mm]
\mbox{s.t.}&\left\{
\begin{array}{ll}
x(S)\geq\gamma(S)+\varepsilon,
&\forall S\in2^N\setminus\operatorname{fix}(X_1),\\[1mm]
x\in X_1.&
\end{array}\right.
\end{array}
\]
Consider the following linear program:
\[
(\textnormal{LP}''_2):\quad
\begin{array}{rl}
\mbox{max}&\varepsilon\\[1mm]
\mbox{s.t.}&\left\{
\begin{array}{ll}
x(v)-x(u)\geq\varepsilon-\varepsilon_1^*,
&(u,v)\in E[\widetilde{\mathcal P}_B]\setminus\widetilde E_*,\\[1mm]
x(v)\geq\varepsilon-\varepsilon_1^*,
&v\in N\setminus\widetilde N_*,\\[1mm]
x(v)-x(u)=0,
&(u,v)\in\widetilde E_*,\\[1mm]
x(v)=0,&v\in\widetilde N_*,\\[1mm]
x(N)=\rho^*.&
\end{array}\right.
\end{array}
\]
\begin{lemma}\label{ept nucleolus eq}
$(\textnormal{LP})''_2$ is equivalent to $(\textnormal{LP})_2$ in $SLP(\eta(\Gamma_{\mathcal M}))$, in the sense that they have the same optimal value and optimal allocations.
\end{lemma}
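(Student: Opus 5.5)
The plan is to prove the two inclusions separately, working throughout with allocations $x\in X_1$. By Theorem~\ref{LC-characterization}, every $x\in X_1$ satisfies $x(B)\ge x(\widetilde B_i)=1+\varepsilon_1^*/\rho^*$ for all bases $B$. Lemma~\ref{fractional-unfixed-local} identifies exactly which vertex values and arc differences are non-constant on $X_1$, and the whole argument is organised around matching the local constraints of $(\mathrm{LP})''_2$ to non-fixed coalition constraints of $(\mathrm{LP}_2)$.

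\emph{Step 1: $(\mathrm{LP}_2)$ feasible implies $(\mathrm{LP})''_2$ feasible.} Let $(x,\varepsilon)$ be feasible for $(\mathrm{LP}_2)$. The equality constraints of $(\mathrm{LP})''_2$ hold automatically, since $x\in X_1$ and by the definitions of $\widetilde N_*$ and $\widetilde E_*$. For the inequalities I would exhibit, for each local quantity, a coalition whose excess is exactly $\varepsilon_1^*$ plus that quantity.
\begin{itemize}
\item[(i)] I first find a set $K\in\mathcal B^{\rho^*}$ that is tight for all of $X_1$, i.e.\ $x(K)=\rho^*+\varepsilon_1^*$. Such $K$ should come from the positive-weight members of the convex decomposition of $(\rho^*/\rho_f^*)c$ in Lemma~\ref{fractional-load-decomposition}: equality in~\eqref{eq:fractional-lc-bound} forces every such member to be tight.
\item[(ii)] For $v\in N\setminus\widetilde N_*$, I take the coalition $S=K+v$ with $v\notin K$. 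Then $\gamma(S)=\rho^*$ and $e(S,x)=\varepsilon_1^*+x(v)$.
\item[(iii)] For an arc $(u,v)\in E_{\widetilde B_i}$, I need a tight $K$ that contains $\widetilde B_i$ as one of its $\rho^*$ blocks. Replacing that block by $\widetilde B_i-u+v$ gives a set $K'$ with excess $\varepsilon_1^*+x(v)-x(u)$.
\end{itemize}
By the ``moreover'' part of Lemma~\ref{fractional-unfixed-local}, these coalitions are not in $\operatorname{fix}(X_1)$. Their $(\mathrm{LP}_2)$ constraints therefore give exactly the required inequalities.

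\emph{Step 2: $(\mathrm{LP})''_2$ feasible implies $(\mathrm{LP}_2)$ feasible.} Let $(x,\varepsilon)$ be an optimal solution of $(\mathrm{LP})''_2$ with $\varepsilon\ge\varepsilon_1^*$. As in Lemma~\ref{noncore nucleolus eq 1}, Theorem~\ref{LC-characterization} gives $x\in X_1$. Take a non-fixed $S$ with $\gamma(S)=k$, and disjoint bases $B_1,\dots,B_k\subseteq S$. I would write
\[
e(S,x)=x\Bigl(S\setminus\textstyle\bigcup_j B_j\Bigr)+\sum_{j=1}^k\bigl(x(B_j)-x(\widetilde B_1)\bigr)+k\,\varepsilon_1^*/\rho^*.
\]
Each difference $x(B_j)-x(\widetilde B_1)$ expands, via Lemma~\ref{matroid exchange 1}, as a sum of nonnegative arc differences.

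If $k=\rho^*$, all summands are nonnegative vertex values or arc differences. Non-fixedness of $S$ forces at least one of them to be non-constant on $X_1$, and that term is at least $\varepsilon-\varepsilon_1^*$ by the local constraints. Hence $e(S,x)\ge\varepsilon$.

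If $k<\rho^*$, I get $e(S,x)\ge\varepsilon_1^*+(\rho^*-k)|\varepsilon_1^*|/\rho^*$. It then suffices to show that the optimal value satisfies $\varepsilon-\varepsilon_1^*\le|\varepsilon_1^*|/\rho^*$. I would try to derive this from the $k=\rho^*$ analysis itself: a suitable non-fixed coalition with $\gamma=\rho^*$ should have excess below $\varepsilon_1^*+|\varepsilon_1^*|/\rho^*$ under every $x\in X_1$. For example, $N$ minus a small part of one block of a tight $K$, or a coalition of the form $K\setminus\{w\}$ compared against the value $\rho^*-1$.

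\emph{Main obstacles.} I expect two hard points.
\begin{itemize}
\item[(a)] Producing a tight $K\in\mathcal B^{\rho^*}$ that contains a prescribed $\widetilde B_i$ as a block, needed in Step 1(iii). My first attempt is to perturb the load toward $\chi^{\widetilde B_i}$, in the same way the proof of Lemma~\ref{fractional-base-reduction} perturbs toward $\chi^B$. I would then apply the rank argument to the union matroid of $\rho^*-1$ copies of $\mathcal M$ restricted to $N\setminus\widetilde B_i$, which should yield a convex decomposition all of whose members are tight and of the form $\widetilde B_i\cup K_0$.
\item[(b)] Controlling the coalitions with $\gamma(S)<\rho^*$ in Step 2. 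These have no analogue in Section~4, because the essential-coalition reduction of~\cite{Gur} is unavailable when the core is empty.
\end{itemize}
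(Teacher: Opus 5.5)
Your two-direction architecture matches the paper's, and Step~1(i)--(ii) are the paper's argument. Two steps fail as planned, though.

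\textbf{Step 2, case $k<\rho^*$.} This case needs no separate treatment, and the route you propose for it is wrong. The non-fixedness argument you use for $k=\rho^*$ applies verbatim for every $k$. If all elements of $S\setminus\bigcup_jB_j$ lay in $\widetilde N_*$ and all matching arcs lay in $\widetilde E_*$, then $x(S)$ would be constant on $X_1$. Hence one summand is at least $\varepsilon-\varepsilon_1^*$, and
\[
e(S,x)\ge\varepsilon-\varepsilon_1^*+\frac{k\varepsilon_1^*}{\rho^*}=\varepsilon+\Bigl(1-\frac{k}{\rho^*}\Bigr)(-\varepsilon_1^*)\ge\varepsilon .
\]
This is exactly how the paper treats all $k$ at once. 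By discarding that summand you kept only $e(S,x)\ge k\varepsilon_1^*/\rho^*$. The inequality $\varepsilon-\varepsilon_1^*\le|\varepsilon_1^*|/\rho^*$ that you then hope to prove is false in general. Take $\mathcal M=U_{2,7}\oplus U_{2,7}$. Then $\rho^*=3$, $\rho_f^*=7/2$, $\varepsilon_1^*=-3/7$, and $\mathcal Q^c$ consists of the two incomparable summands. The nucleolus gives $3/14$ to every element, and $\varepsilon_2^*=-3/14$, attained by six elements of one summand together with all of the other. So $\varepsilon_2^*-\varepsilon_1^*=3/14>1/7=|\varepsilon_1^*|/\rho^*$.

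\textbf{Step 1(iii).} This is the real gap: the key idea is missing. You need a set $K\in\mathcal B^{\rho^*}$, tight on all of $X_1$, with $\widetilde B_i$ as one of its blocks. You also need $v\notin K$, which you do not mention; otherwise $(K\setminus\widetilde B_i)\cup(\widetilde B_i-u+v)$ is not a union of $\rho^*$ disjoint bases. Neither property is established.
\begin{itemize}
\item Perturbing toward $\chi^{\widetilde B_i}$ as in Lemma~\ref{fractional-base-reduction} yields tight members of $\mathcal B^{\rho^*}$. It gives no control over which sets are blocks (containing $\widetilde B_i$ is not enough), nor over whether $v$ is avoided.
\item Passing to $\rho^*-1$ copies of $\mathcal M$ on $N\setminus\widetilde B_i$ would need a point of that base polytope. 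The packing supplies none, since the sets $\widetilde B_j\setminus\widetilde B_i$ are not bases.
\item A positive-weight $\widetilde B_i$ can even contain a cocircuit, for instance when one summand of a direct sum has slack. Then $N\setminus\widetilde B_i$ contains no base at all.
\end{itemize}

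The paper never needs a prescribed block. Its argument runs as follows.
\begin{itemize}
\item It handles $u\in\widetilde N_*$ by the vertex constraint and reduces to covering pairs $Q_u\prec Q_v$ in $\mathcal Q^c$.
\item It fixes an auxiliary $y\in X_1$ with $y_{Q_u}=y_{Q_v}>0$ and every other component at value $0$ or strictly above $y_{Q_u}$.
\item It decomposes $(\rho^*/\rho_f^*)c+\delta(\chi^{\{v\}}-\chi^{\{u\}})$ into members $L$ of $\mathcal B^{\rho^*}$, each tight for $y$.
\item It picks a member $K$ of the fixed decomposition from Lemma~\ref{fractional-load-decomposition} with $|K\cap Q_u|>|L\cap Q_u|$.
\item It applies Lemma~\ref{matroid exchange 1} in $\mathcal M^{\rho^*}$ to $K$ and $L$. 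The matched pairs have nonnegative $y$-differences summing to $0$, which forces a pair $a\in Q_u$, $b\in Q_v$ with $K-a+b\in\mathcal B^{\rho^*}$.
\item Since least core allocations are constant on SCCs, $x(b)-x(a)=x(v)-x(u)$. The non-fixed coalition $K-a+b$ then yields the required constraint.
\end{itemize}

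In Step~1(ii) you should also say why a tight $K$ omitting $v$ exists: this follows from $(\rho^*/\rho_f^*)c(v)<1$.
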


\begin{proof}
Let $(x,\varepsilon)$ be an optimal solution of $(\mathrm{LP}''_2)$. Since every allocation in $X_1$ is feasible for this program at $\varepsilon=\varepsilon_1^*$, optimality and the local constraints imply that the vertex values and edge differences are nonnegative. By Theorem~\ref{LC-characterization}, $x\in X_1$. We show that $(x,\varepsilon)$ is feasible for $(\mathrm{LP}_2)$. Consider $S\notin\operatorname{fix}(X_1)$, and choose $k=\gamma(S)$ pairwise disjoint bases contained in $S$.

For each of these bases $B$, choose a perfect matching $\pi$ from $\widetilde B_1\setminus B$ to $B\setminus\widetilde B_1$. By Lemma~\ref{matroid exchange 1} and~\eqref{eq:fractional-minimum-bases},
\[
x(B)=1+\frac{\varepsilon_1^*}{\rho^*}
+\sum_{(u,v)\in\pi}\bigl[x(v)-x(u)\bigr].
\]
If every arc in these matchings belonged to $\widetilde E_*$ and every element of $S$ outside the chosen bases belonged to $\widetilde N_*$, then $x(S)$ would be constant on $X_1$, a contradiction. Hence at least one matching arc is outside $\widetilde E_*$, or one of the remaining elements is outside $\widetilde N_*$.  Summing the weights of the disjoint bases and the remaining elements gives
\[
x(S)\geq k\left(1+\frac{\varepsilon_1^*}{\rho^*}\right)
+\varepsilon-\varepsilon_1^*\geq k+\varepsilon,
\]
where the last inequality follows from $k\leq\rho^*$ and $\varepsilon_1^*<0$. Thus $(x,\varepsilon)$ is feasible for $(\mathrm{LP}_2)$.

Conversely, let $(x,\varepsilon)$ be feasible for $(\mathrm{LP}_2)$. By Lemma~\ref{fractional-load-decomposition}, fix a convex decomposition of $(\rho^*/\rho_f^*)c$ into incidence vectors of sets in $\mathcal B^{\rho^*}$, all with positive coefficients. Equality in~\eqref{eq:fractional-lc-bound} and the least core constraints imply that every base $K$ in this decomposition satisfies $x(K)=\rho^*+\varepsilon_1^*$ throughout $X_1$.

For $v\in N\setminus\widetilde N_*$, since $(\rho^*/\rho_f^*)c(v)<1$, some base $K$ in the decomposition omits $v$. By Lemma~\ref{fractional-unfixed-local}, $x(v)$ is not constant on $X_1$, so $K+v\notin\operatorname{fix}(X_1)$. Its coalition constraint, together with $x(K)=\rho^*+\varepsilon_1^*$ and $\gamma(K+v)=\rho^*$, gives $x(v)\geq\varepsilon-\varepsilon_1^*$.

Now let $(u,v)\in E[\widetilde{\mathcal P}_B]\setminus\widetilde E_*$. If $u\in\widetilde N_*$, the required inequality follows from the vertex constraint proved above. Otherwise, let $Q_u,Q_v\in\mathcal Q^c$ be the components containing $u,v$, respectively. Then $Q_u\preceq Q_v$ and $Q_u\ne Q_v$. We may assume that no component lies strictly between $Q_u$ and $Q_v$ in the partial order; otherwise, the constraint follows along a chain of such pairs by monotonicity.

For every $R\in\mathcal Q^c$, distributing $\rho^*$ uniformly over all elements in the components $R'$ with $R\preceq R'$, and assigning zero elsewhere, gives an allocation in $X_1$ by Theorem~\ref{LC-characterization}. Let $y$ be a convex combination with positive coefficients of these allocations for all $R$ satisfying $Q_u\preceq R$ and $R\ne Q_v$. By construction, $y_{Q_u}=y_{Q_v}>0$, and every other component has value either zero or greater than $y_{Q_u}$.

For sufficiently small $\delta>0$, transfer weight $\delta/\rho^*$ from a positive weight base $\widetilde B_i$ inducing $(u,v)$ to $\widetilde B_i-u+v$ in the convex representation of $c/\rho_f^*$. Since $(\rho^*/\rho_f^*)c(v)<1$, the rank argument in Lemma~\ref{fractional-load-decomposition} shows that $(\rho^*/\rho_f^*)c+\delta(\chi^{\{v\}}-\chi^{\{u\}})$ is a convex combination of incidence vectors of sets in $\mathcal B^{\rho^*}$. Since $y(u)=y(v)$, its weight with respect to $y$ remains $\rho^*+\varepsilon_1^*$, so every set $L$ with positive coefficient in this new decomposition satisfies $y(L)=\rho^*+\varepsilon_1^*$. The perturbation decreases the average intersection with $Q_u$ by $\delta$. Hence some $K$ in the fixed decomposition and some $L$ in the new decomposition satisfy $|K\cap Q_u|>|L\cap Q_u|$.

Apply Lemma~\ref{matroid exchange 1} in $\mathcal M^{\rho^*}$ to $K$ and $L$. Each matched pair $(a,b)$ satisfies $K-a+b\in\mathcal B^{\rho^*}$ and $y(b)-y(a)\geq0$. These differences sum to $y(L)-y(K)=0$, so they are all zero. Since $|K\cap Q_u|>|L\cap Q_u|$, at least one matched pair has $a\in Q_u$ and $b\notin Q_u$. Our choice of $y$ forces $b\in Q_v$. The set $K$ is tight throughout $X_1$, whereas $x(b)-x(a)=x(v)-x(u)$ is not constant there by Lemma~\ref{fractional-unfixed-local}. Thus $K-a+b\notin\operatorname{fix}(X_1)$, and its coalition constraint gives $x(v)-x(u)\geq\varepsilon-\varepsilon_1^*$.

The remaining constraints of $(\mathrm{LP}''_2)$ follow from $x\in X_1$. Thus every feasible solution of $(\mathrm{LP}_2)$ is feasible for $(\mathrm{LP}''_2)$, and every optimal solution of $(\mathrm{LP}''_2)$ is feasible for $(\mathrm{LP}_2)$. Therefore the two programs have the same optimal value and the same optimal allocations.
\end{proof}

Using the hierarchy partition and the numbers $n_i$ defined above, we obtain the following result.

\begin{lemma}\label{nucleolus21}
The unique optimal allocation of $(\mathrm{LP}''_2)$ is given by
\begin{equation}\label{eq:fractional-nucleolus}
x_Q=i(\varepsilon_2^*-\varepsilon_1^*)
\quad(Q\in\mathcal Q_i^c),\qquad
\varepsilon_2^*=\varepsilon_1^*
+\frac{\rho^*}{n_1+2n_2+\cdots+tn_t},
\end{equation}
and $x(v)=0$ for $v\in\widetilde N_*$.
\end{lemma}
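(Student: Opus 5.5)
The plan mirrors the proof of Lemma~\ref{nucleolus2}. I substitute $\mu=\varepsilon-\varepsilon_1^*$, so that $(\mathrm{LP}''_2)$ becomes the program of maximizing $\mu$ subject to local constraints with right-hand side $\mu$. I then derive an upper bound on $\mu$ from the hierarchy partition, show that the bound is attained by the stated vector, and read off uniqueness from the equality case.

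First, let $(x,\varepsilon)$ be optimal. Every allocation of $X_1$ is feasible at $\varepsilon=\varepsilon_1^*$, since all its vertex values and edge differences are nonnegative by Theorem~\ref{LC-characterization}. Hence $\mu\ge 0$, and as in the proof of Lemma~\ref{ept nucleolus eq}, $x\in X_1$. By Corollary~\ref{fractional-canonical-partition}, $x$ is constant on each SCC, so $x_Q$ is well defined. Lemma~\ref{fractional-unfixed-local}(a) gives that elements of members of $\mathcal Q^c$ lie outside $\widetilde N_*$, so $x_Q\ge\mu$ for every $Q\in\mathcal Q_1^c$.

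For $Q\in\mathcal Q_i^c$ with $i>1$, property (ii) of the hierarchy partition gives $Q'\in\mathcal Q_{i-1}^c$ with $Q'\preceq Q$. I need an actual arc, or at least a path, from $Q'$ to $Q$ whose every arc either is a non-$\widetilde E_*$ arc or lies inside an SCC. Take a directed $Q'$–$Q$ path. Every vertex on it can reach $Q\subseteq\bigcup\mathcal Q^c$, and no edge leaves $\bigcup\mathcal Q^c$, so the path stays inside $\bigcup\mathcal Q^c$. Each arc of the path is then either inside an SCC (difference $0$) or between distinct members of $\mathcal Q^c$. The latter kind is not in $\widetilde E_*$ by Lemma~\ref{fractional-unfixed-local}(b), so its difference is at least $\mu$. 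Since $Q'\neq Q$, at least one arc is of the latter kind, and all differences are nonnegative. This gives $x_Q\ge x_{Q'}+\mu$, and induction yields $x_Q\ge i\mu$. Summing, $\rho^*=x(N)=\sum_i\sum_{Q\in\mathcal Q_i^c}|Q|x_Q\ge\mu(n_1+2n_2+\cdots+tn_t)$, which bounds $\mu$ as claimed.

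For attainment, I set $x_Q=i\mu^*$ on $\mathcal Q_i^c$ with $\mu^*=\rho^*/\sum_i i n_i$, and $x=0$ on $\widetilde N_*$. Then the total is $\rho^*$, the vertex constraints hold, and arcs within an SCC or within $\widetilde N_*$ have zero difference. Lemma~\ref{fractional-unfixed-local} says these are exactly the arcs of $\widetilde E_*$. No arc goes from $\bigcup\mathcal Q^c$ to $\widetilde N_*$. An arc from $\widetilde N_*$ into some $Q$ has difference $x_Q\ge\mu^*$. An arc from $Q\in\mathcal Q_i^c$ to a distinct $Q''\in\mathcal Q_j^c$ requires $j>i$, because $Q\preceq Q''$ and minimal-element peeling puts $Q''$ in a strictly later layer. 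Its difference is therefore $(j-i)\mu^*\ge\mu^*$. This step, confirming that every inter-component arc goes to a strictly later layer and that the path argument above never leaves $\bigcup\mathcal Q^c$, is the only delicate point; I expect it to follow from the closure property already noted before Lemma~\ref{fractional-unfixed-local}.

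Uniqueness follows because equality in the summed bound forces $x_Q=i\mu^*$ on each block, and $x$ vanishes on $\widetilde N_*$ by constraint. Translating back, $\varepsilon_2^*=\varepsilon_1^*+\mu^*$.
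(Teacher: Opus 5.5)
Your proposal is correct and follows essentially the same route as the paper. The paper's proof reruns the hierarchy argument of Lemma~\ref{nucleolus2} with $\bar{\mathcal Q}_i^+$ and $\varepsilon$ replaced by $\mathcal Q_i^c$ and $\varepsilon-\varepsilon_1^*$: bound $x_Q\ge i(\varepsilon-\varepsilon_1^*)$ layer by layer, sum against $x(N)=\rho^*$, check attainment, and read uniqueness off the equality case.

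Your use of a directed $Q'$--$Q$ path in place of the paper's single connecting arc is a harmless, slightly more careful variant. The path stays inside $\bigcup\mathcal Q^c$ because it starts in $Q'$ and no arc leaves that union; your remark that every vertex on it can reach $Q$ is not what keeps it there. In fact such a path passes through no other member of $\mathcal Q^c$, since that member would have to lie in a layer strictly between $i-1$ and $i$. So a direct arc from $Q'$ to $Q$ always exists.
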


\begin{proof}
The proof follows the same hierarchy argument as that of Lemma~\ref{nucleolus2}, with $\bar{\mathcal Q}_i^+$ and $\varepsilon$ replaced by $\mathcal Q_i^c$ and $\varepsilon-\varepsilon_1^*$, respectively.
\end{proof}

By Lemmas~\ref{ept nucleolus eq} and~\ref{nucleolus21}, the second program in SLP$(\eta(\Gamma_{\mathcal M}))$ has a unique optimal allocation. This allocation is the nucleolus.

\begin{theorem}\label{fractional-nucleolus-algorithm}
Let $\Gamma_{\mathcal M}=(N,\gamma)$ be the matroid packing game defined on $\mathcal M=(N,\mathcal I)$. When $\mathcal C(\Gamma_{\mathcal M})=\emptyset$, the least core value, the partition $\mathcal Q^c$, its partial order and the nucleolus $\eta(\Gamma_{\mathcal M})$ can be computed in polynomial time.
\end{theorem}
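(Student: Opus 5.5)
The proof is an algorithmic assembly of results already established. The pipeline is: compute $\rho^*$ and a maximum fractional packing, read off $\varepsilon_1^*$, build the exchange graph and its SCC structure, and finally evaluate the closed form of Lemma~\ref{nucleolus21}. At each stage I will check that only polynomially many oracle calls and arithmetic operations on polynomially sized rationals are needed.

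\emph{Step 1: the least core value.} I first compute $\rho^*$ by a maximum base packing in polynomial time \cite[Theorem~42.5]{Schr03}. Next I compute a maximum fractional base packing $\widetilde{\mathcal P}_B=\{\widetilde B_1,\ldots,\widetilde B_s\}$ with positive rational weights $\lambda_i$, where $s$ is polynomial \cite[Corollary~42.7a]{Schr03}. This gives $\rho_f^*=\sum_i\lambda_i$. Emptiness of the core can be detected via Theorem~\ref{core 3}; under $\mathcal C(\Gamma_{\mathcal M})=\emptyset$ we have $\rho_f^*>\rho^*$. Lemma~\ref{fractional-base-reduction} then gives $\varepsilon_1^*=(\rho^*)^2/\rho_f^*-\rho^*$ explicitly.

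\emph{Step 2: the partition and order.} For each $i$ and each pair $u\in\widetilde B_i$, $v\notin\widetilde B_i$, I test with one oracle call whether $\widetilde B_i-u+v$ is independent; since it has the right cardinality, this is the same as being a base. This uses at most $s|N|^2$ calls and yields $E[\widetilde{\mathcal P}_B]$. The loads $c(v)$ and the set $\widetilde N_0$ are read off directly. Strongly connected components, the condensation DAG, and reachability to $\widetilde N_0$ are then standard graph computations, so $\widetilde{\mathcal Q}$, $\mathcal Q^c$ and $\preceq$ are all obtained in polynomial time. By Lemma~\ref{SSC2}, the hierarchy partition $\mathcal Q^c_1\prec\cdots\prec\mathcal Q^c_t$ is also polynomial; it is obtained by repeatedly stripping minimal elements of the DAG. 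Corollary~\ref{fractional-canonical-partition} guarantees $\mathcal Q^c\neq\emptyset$.

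\emph{Step 3: the nucleolus.} If $|\mathcal Q^c|=1$, the least core is the singleton $\hat x$ described before Lemma~\ref{ept nucleolus eq}, and $\hat x$ is the nucleolus. Otherwise I compute $n_i$ and the value $\varepsilon_2^*$ from \eqref{eq:fractional-nucleolus}. Lemma~\ref{fractional-unfixed-local} identifies $\widetilde N_*$ and $\widetilde E_*$ purely combinatorially, so $(\mathrm{LP}''_2)$ is explicitly available. Lemma~\ref{ept nucleolus eq} shows that $(\mathrm{LP}''_2)$ has the same optimal allocations as $(\mathrm{LP}_2)$, and Lemma~\ref{nucleolus21} shows that this optimal allocation is unique. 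In Maschler's scheme the set $X_2$ is therefore a single point, which must be $\eta(\Gamma_{\mathcal M})$. The formula requires only $O(|N|)$ arithmetic operations.

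The step needing the most care is justifying that the fractional packing is polynomially representable, so that the graph $G[\widetilde{\mathcal P}_B]$ has polynomial size and the rationals $\rho_f^*$ and $\varepsilon_1^*$ have polynomial encoding length. This is exactly what the cited strongly polynomial result provides. Everything after that is finite graph processing plus the closed-form formula.
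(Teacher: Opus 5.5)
Your proposal is correct and follows essentially the same route as the paper's proof. Both compute $\rho^*$ and a maximum fractional packing with polynomially many positive-weight bases, build $G[\widetilde{\mathcal P}_B]$ with polynomially many oracle calls, read off $\widetilde N_0$, $\widetilde N_*$, the SCCs, their condensation order and the hierarchy partition, and then evaluate the closed form of Lemma~\ref{nucleolus21}, justified by Lemma~\ref{ept nucleolus eq}. The differences are cosmetic: you make the least core value $\varepsilon_1^*=(\rho^*)^2/\rho_f^*-\rho^*$ explicit via Lemma~\ref{fractional-base-reduction} and treat $|\mathcal Q^c|=1$ separately, while the paper gives more detailed operation counts.
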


\begin{proof}
Compute a maximum fractional base packing and its positive weight bases using the algorithm recalled in Section~\ref{sec:packing-preliminaries}. For each $\widetilde B_i$, test $\widetilde B_i-u+v$ for all $u\in\widetilde B_i$ and $v\notin\widetilde B_i$. This constructs $G[\widetilde{\mathcal P}_B]$ using at most $s\,r_{\mathcal M}(N)\bigl(|N|-r_{\mathcal M}(N)\bigr)$
independence oracle calls and $O(s|N|^2)$ additional operations. Compute $\widetilde N_0$ from the loads, and obtain $\widetilde N_*$ by a reverse search from $\widetilde N_0$. Then compute the SCCs on $N\setminus\widetilde N_*$ and their condensation graph. These operations take $O(|N|+|E[\widetilde{\mathcal P}_B]|)$ time. The condensation graph represents $\preceq$; its transitive closure, if required, can be computed in $O(|N|(|N|+|E[\widetilde{\mathcal P}_B]|))$ time. The hierarchy partition and the nucleolus require a topological traversal and summation, taking $O(|N|+|E[\widetilde{\mathcal P}_B]|)$ additional time.

The support size $s$ is polynomial in $|N|$. The packing number $\rho^*$ can also be computed in polynomial time, as recalled in Section~\ref{sec:packing-preliminaries}. Therefore all the stated computations are polynomial in the independence oracle model.
\end{proof}

\section{Conclusion}
For future work, one direction is to study other solution concepts for matroid
packing games. We have obtained results for the nucleon, a multiplicative variant
of the nucleolus, and shown that it coincides either with a core allocation or
with the nucleolus.

Another direction is matroid covering games, which generalize arboricity games
from graphic matroids to general matroids. Using the framework developed in
this paper, we have obtained characterizations of the core and polynomial time
algorithms for computing the nucleolus. We are currently investigating whether matroid packing and covering games can be
studied under a unified theoretical framework.
A recent work~\cite{EE26} independently resolves the open problem left by the
2020 study of network strength games~\cite{BB20}. Our work <reaches this conclusion from a
different perspective: network strength games are the graphic matroid special
case of matroid packing games. Complementary to their approach, our method avoids a general ellipsoid algorithm
framework and yields both a polynomial time algorithm and a structural
characterization of the nucleolus.
A recent work~\cite{EE26} proposes a new framework for computing the nucleolus and proves that the nucleolus of network strength games can be computed in polynomial time. The approach operates in a reduced optimization space induced by a set of unfixed coalitions. The method  relies on the ellipsoid algorithm and assumes that the family of fixed coalitions is given in advance.
 Our work reaches this conclusion from a
different perspective: network strength games are the graphic matroid special
case of matroid packing games. Complementary to their approach, our method avoids a general ellipsoid algorithm
framework and yields both a polynomial time algorithm and a structural
characterization of the nucleolus.

\bibliographystyle{splncs04}
\bibliography{wine627}

%
%
%
%
\end{document}